\def\scr#1{{\cal #1}}
\newtheorem{theorem}{Theorem}
\newtheorem{lemma}{Lemma}
\def\qed{ \rule{.08in}{.08in}}
\def\send#1#2{\stackrel{#1}{\hbox to #2{\rightarrowfill}}}
\def\-{\!\!\!\!\!-}
 \def\qed{ \rule{.1in}{.1in}}
\def\scr#1{{\cal #1}}
\def\qed{ \rule{.1in}{.1in}}
\newcounter{seqn}[equation]
\def\theseqn{\arabic{equation}\alph{seqn}}
\def\endseqn{\eqno \@seqnnum
$$\ignorespaces}
\def\@seqnnum{(\theseqn)}
\newskip\mcentering \mcentering=0pt plus 1000pt minus 1000pt
\def\meqalignno#1{
\halign to\displaywidth{
    \hbox to 0pt{\kern\displaywidth\llap{$##$}\hss}\tabskip=\mcentering
    &\hfil$\displaystyle{##}$\tabskip=\mcentering
   &&$\displaystyle{{}##}$\hfil\tabskip=\mcentering
    \crcr
    #1\crcr}}
\def\dspace{\multiply\normalbaselineskip 150
		  \divide\normalbaselineskip 100 \normalbaselines
		  \csname @@normalbaselineskip\endcsname\normalbaselineskip}
\def\sspace{\multiply\normalbaselineskip 200
		 \divide\normalbaselineskip 300 \normalbaselines
		 \csname @@normalbaselineskip\endcsname\normalbaselineskip}
\def\sdspace{\multiply\normalbaselineskip 160
		 \divide\normalbaselineskip 150 \normalbaselines
		 \csname @@normalbaselineskip\endcsname\normalbaselineskip}
\def\@{\tilde}
\def\3dot#1{\buildrel\textstyle...\over#1}
\newcommand{\drawnodex}[6]{
	\path (#1) node[draw,thick,circle,minimum width=1em,minimum height=1em,inner sep=0.2ex] (#2) {} node[inner sep=0.2ex,fill=white,#3=0 of #2] {#4} node[inner sep=0.2ex,fill=white,#5=0 of #2] {#6};
}
\newcommand{\drawrededge}[2]{
	\draw[ultra thick,red] (#1) -- (#2);
}
\newcommand{\drawgreenedge}[2]{
	\draw[ultra thick,green] (#1) -- (#2);
}
\newcommand{\drawallyx}[5][0.5]{
	\path (#2) -- (#3) node(_1)[allow upside down,sloped,above,minimum height=1.2em,anchor=center,pos=0.3] {};
	\path (#2) -- (#3) node(_2)[allow upside down,sloped,above,minimum height=1.2em,anchor=center,pos=0.7] {};
	\draw[-stealth,ultra thick,green!80!black] (#2) .. controls (_1.north) and (_2.north) .. (#3) node[inner sep=0.2ex,fill=white,sloped,pos=#1] {#4};
	\draw[-stealth,ultra thick,green!80!black] (#3) .. controls (_2.south) and (_1.south) .. (#2) node[inner sep=0.2ex,fill=white,sloped,pos=#1] {#5};
}
\newcommand{\drawfoex}[5][0.5]{
	\path (#2) -- (#3) node(_1)[allow upside down,sloped,above,minimum height=1.2em,anchor=center,pos=0.3] {};
	\path (#2) -- (#3) node(_2)[allow upside down,sloped,above,minimum height=1.2em,anchor=center,pos=0.7] {};
	\draw[-stealth,ultra thick,red] (#2) .. controls (_1.north) and (_2.north) .. (#3) node[inner sep=0.2ex,fill=white,sloped,pos=#1] {#4};
	\draw[-stealth,ultra thick,red] (#3) .. controls (_2.south) and (_1.south) .. (#2) node[inner sep=0.2ex,fill=white,sloped,pos=#1] {#5};
}
\title{Countries' Survival in Networked International Environments
}
\author{Yuke Li, A. Stephen Morse, Ji Liu, and Tamer Ba\c{s}ar\thanks{
This work was supported by National Science Foundation grant n.1607101.00, US Air Force grant n. FA9550-16-1-0290 and US Army Research Office Grant W911NF-16-1-0485.
Y.~Li and A. S.~Morse are respectively with the Department of Political Science and the Department of Electrical Engineering, Yale University (\texttt{\{yuke.li, as.morse\}@yale.edu}). J.~Liu is with the Department of Electrical and Computer Engineering, Stony Brook University (\texttt{ji.liu@stonybrook.edu}).
T.~Ba\c{s}ar is with the Department of Electrical and Computer Engineering,
University of Illinois at Urbana-Champaign (\texttt{basar1@illinois.edu}).
}}
\begin{document}

\maketitle
\thispagestyle{empty}
\pagestyle{empty}

\begin{abstract}
This paper applies a recently developed power allocation game in \cite{allocation} to study the countries' survival problem in networked international environments. In the game, countries strategically allocate their power to support the survival of themselves and their friends and to oppose that of their foes, where by a country's survival is meant when the country's total support equals or exceeds its total threats. This paper establishes conditions that characterize different types of networked international environments in which a country may survive, such as when all the antagonism among countries makes up a complete or bipartite graph.

\keywords survival, countries, power, allocation, networks
\end{abstract}


\section{Introduction}

In a recent paper \cite{allocation}, a \emph{power allocation game} (the PAG, hereafter) on networks has been studied to understand countries' strategic behaviors in international environments. In the game, countries allocate their resources, i.e., deploying their total power, to their friends and foes in order to pursue certain goals, such as protecting the survival of their friends and themselves while opposing that of their foes. In \cite{allocation}, the framework of the power allocation game, which is an infinite, resource-allocation game on graphs, is introduced, and the existence of a pure strategy Nash equilibrium is established. 

A question both immediate from studying countries' power allocation in \cite{allocation} and meaningful from a decision making perspective is this: can a country \emph{always} survive when it allocates its power in a complex, networked international environment? This question is motivated by a well-accepted assumption in international relations theory that for a country, its survival is a  fundamental objective that needs to be pursued on an \emph{everyday} basis.  For instance, according to John Mearsheimer, a representative of the school of thought termed as ``offensive realism'' which generally studies the scenarios in which countries are aggressive and even expansionist, countries are seeking nothing more than their own security and survival at a minimum\cite{mearsheimer2001tragedy} (similar discussions can also be found in \cite{waltz1959man} and \cite{waltz1979theory}).

The studies of countries' survivability in complex international environments necessarily fall into the broad category of the studies of military strategies, which can be dated to the work of Sun Tzu \cite{tzu2009art} and Carl von Clausewitz \cite{von1873war}. In terms of the famous examples of contemporary scholarship on military strategies, Jack L Snyder's Rand Report for the US Air Force entitled ``The Soviet strategic culture: Implications for limited nuclear operations''  \cite{snyder1977soviet} explores several factors -- historical, institutional, and political -- that are conducive to a uniquely Soviet approach to strategic thought, and \cite{spykman1942america,weigley1977american,art1991defensible, carter2000preventive, gray1981national,  gacek1994logic} focuses more on the American approach. In addition, the work of  \cite{luttwak2001strategy, murray1994making, schelling1980strategy} gives a more general treatment of military strategies. Outside of academia, \cite{wylie2014military} and \cite{cook2001us} are two studies published within the American military system, with the former developing a theory of ``power control'' and the latter overviewing some major concepts underlying the study of military strategy (e.g., national power, national interests, strategic risk and strategic art). In particular, \cite{schelling1980strategy} is representative of a line of work that applies game theory to explore a list of survival-related issues, including the strategies to force the other side into compliance and the limits of manipulating those strategies (e.g., ``brinkmanship'' by  \cite{nalebuff1986brinkmanship}, ``the spiral model'' by \cite{kydd1997game} and \cite{glaser1992political},  and ``the deterrence model'' by \cite{glaser1992political, zagare1985toward,zagare1987dynamics,kilgour1991credibility,zagare1993asymmetric,zagare1996classical,zagare2000perfect,snyder2015deterrence}).

However, none of the above works have explored countries' survivability within a networked international environment. This paper will take a preliminary step towards applying the ideas recently developed in \cite{allocation} to study this military strategy and national policy of managing the military resources; i.e., the ``power allocation strategies'', in the context of a networked international environment and the implications of this environment for countries' survival.

The remainder of the paper is organized as follows. The theoretical framework of the PAG on networks in \cite{allocation} will first be reviewed in Section II.
Then, Section III will explore two questions: First, what could be the possible environments in which countries can survive when they have friends? Second, what could be the possible environments in which they can survive when they do not have friends? As will be discussed, some results in Section III  characterize the game environments for a given country to survive in at least one equilibrium (class) of the game; i.e., it has any possibility of survival, while other results characterize the game environments for the country to survive in all equilibrium (classes); i.e., it can absolutely survive. Lastly, the environments for uniquely predicting about countries' survival will be discussed in Section IV.  

%
%
%
%
%

\section{The Power Allocation Game}

In this section, the PAG proposed in \cite{allocation} is briefly reviewed, beginning with the definitions of the elements that constitute both the environment in which the PAG it is played and then the PAG itself. In a networked international environment, there is a collection $\mathcal{C}$ of $n$ countries, labeled $1,2, \ldots, n$; let the set of country labels be $\mathbf{n} = \{1,2, \ldots, n\}$.  The total power of all $n$ countries is defined as a real, nonnegative valued row vector $p = [p_{i}]_{1 \times n}$, where $p_{i}$ is country $i$'s total power.


Any two countries in $\mathcal{C}$ can be said to have a \emph{relation}. A \emph{relation} is technically speaking a binary relation defined on $\mathcal{C}$, which takes one of the following four possibilities: with itself, each country $i$ can be said to have a \emph{self} relation; with any other country $j$ in $\mathcal{C}$, each country $i$ can have a \emph{friend}, an \emph{adversary} or a \emph{null} relation. The binary relation is reflexive; in addition, assume that it is symmetric; e.g., if $j$ is a friend of $i$, then $i$ is a friend of $j$ and similarly if $j$ is an adversary of $i$ then $i$ is an adversary of $j$, and that it is not transitive. 

Based on this binary relation, country $i$ has a subset of countries in $\mathcal{C}$ with labels in $\mathcal{F}_{i} \subset \mathbf{n}$ called its \emph{friends}, a subset of countries in $\mathcal{C}$ with labels in $\mathcal{A}_{i} \subset \mathbf{n}$ called its \emph{adversaries}, and the set of countries in $\mathcal{C}$ with labels in $\mathbf{n} - \{i\} \cup \mathcal{A}_{i} \cup \mathcal{F}_{i}$ $i$ has \emph{null} relations with; i.e., having no specific relations with. For each $i \in \mathbf{n}$, $\{i\}, \mathcal{F}_{i}$, $\mathcal{A}_{i}$ are disjoint. 

The unordered pair $\{i,j\}$ stands for a pair of distinct labels in $\mathbf{n}$ such that $i$ and $j$ have a relation.  Denote the set of all friendly pairs as $\mathcal{R}_{f}$ and the set of all adversarial pairs as $\mathcal{R}_{a}$. Suppose the number of pairs in $\mathcal{R}_{f} \cup \mathcal{R}_{a}$ is $m$. A map $\eta: \mathcal{R}_{f} \cup \mathcal{R}_{a} \to \mathbf{m}$ where $\mathbf{m} = \{1,2, ..., m\}$ determines for each element in $\mathcal{R}_{f} \cup \mathcal{R}_{a}$ a distinct label in the set $\mathbf{m}$. 

Country $i$'s power allocation strategy is a real, nonnegative valued row vector $u_{i} \in \mathbb{R}^{1 \times n}$ whose $j$-th entry is $u_{ij}$. If $j \in \mathcal{F}_{i}$, then $u_{ij}$ represents the portion of country $i$'s total power which country $i$ is willing to commit to the support or defense of friend $j$ against friend $j$'s adversaries. If $j \in \mathcal{A}_{i}$, then $u_{ij}$ is the portion of country $i$'s total power that it is committing to its possible offense actions against country $j$. If $j \in \{i\}$, $u_{ii}$ is the portion of country $i$'s total power it holds in reserve. Finally, if $\mathbf{n} - \{i\} \cup \mathcal{A}_{i} \cup \mathcal{F}_{i}$, $u_{ij}$ represents the portion of country $i$'s total power committing to $j$ which $i$ has no specific relation with, and we stipulate that $u_{ij} = 0$. Accordingly, for each $i \in \mathbf{n}$, $\sum_{j = 1}^{n}u_{ij} = p_{i}$ so the $i$-th row sum of the \emph{power allocation matrix } $U = [u_{ij}]_{n \times n}$ is $p_{i}$. $\mathcal{U}$ denotes the set of all admissible strategy matrices.

For each country $i \in \mathbf{n}$, there are two types of nonnegative-valued functions on $\mathcal{U}$.  The first, called a \emph{support function} for agent $i$, is the map $\sigma_{i}$: $\mathcal{U} \to [0, \infty)$,
\begin{equation*}
U \longmapsto u_{ii} + \sum_{j\in \mathcal{F}_{i}}u_{ji} + \sum_{j \in \mathcal{A}_{i}}u_{ij}
\end{equation*} Here $\sum_{j\in \mathcal{F}_{i}}u_{ji}$ is the total amount of power the friends of country $i$ commit to country $i$'s defense and $\sum_{j \in \mathcal{A}_{i}}u_{ij}$ is the total amount of power country $i$ commits to its possible offenses against all of its adversaries. The second function, called a \emph{threat function} for country $i$, is the map $\tau_{i}: \mathcal{U} \to [0, \infty)$,

\begin{equation*}
U \longmapsto \sum_{j \in \mathcal{A}_{i}}u_{ji}
\end{equation*}Thus $\tau_{i}(U)$ is the total power of all of country $i$'s adversaries commit to their respective offenses against country $i$.

As a consequence of specific allocations, each country $i$ may find itself in one of three possible states, namely a \emph{safe} state, a \emph{precarious} state, or an \emph{unsafe} state. A country is said to \emph{survive} if it is in the safe or precarious state. Let $x_{i} : \mathcal{U} \to \{\text{safe}, \text{precarious}, \text{unsafe}\}$ denote the map

\begin{equation*}
U \longmapsto \begin{cases}
\text{safe}   & \text{if}~ \sigma_{i}(U) > \tau_{i}(U)\\
\text{precarious}  & \text{if}~ \sigma_{i}(U) = \tau_{i}(U)\\
\text{unsafe} & \text{if}~ \sigma_{i}(U) < \tau_{i}(U)

\end{cases}
\end{equation*}

$x_{i}(U)$ is called the state of country $i$ induced by power allocation matrix $U \in \mathcal{U}$. More generally, by the state of the overall collection of countries $\mathcal{C}$ induced by power allocation matrix $U$ is meant the row vector $x(U) = [x_{i}(U)]_{1 \times n}$. The state space of $\mathcal{C}$ is thus the finite set $\mathcal{X} = \{x(U) : U \in \mathcal{U}\}$ whose cardinality is  at most $3^{n}$.

In the sequel, two axioms are proposed to construct a rationale for countries to choose their own power allocation strategies in a game-theoretic context, which are based on the states of themselves, their friends and adversaries induced by the power allocation matrices. The axioms will model their \emph{preferences} for all possible strategy combinations of all countries; i.e.,  the power allocation matrices in $\mathcal{U}$. 

``Preference'' is a terminology commonly used in the social sciences to describe agents' ordering of alternatives; take two arbitrary power allocation matrices $U$ and $V$, country $i$ may have a strong preference relation, e.g.,``strongly prefer''\footnote{In many other contexts, it is also termed as ``strictly prefer''.} $V$ to $U$ (written as $U \prec V$), a weak preference relation, e.g., ``weakly prefer'' $V$ to $U$ (written as $U \preceq V$), or an indifference relation, i.e., are indifferent between the two (written as $U \sim V$).

It is natural to presume that any country $i$ cares positively about the survival of its friends and itself, negatively about the survival of its adversaries, and indifferently about the survival of those countries with whom it has no relations. These observations motivate Axiom 1: 

{\bf Axiom 1 (Multi-Front Survival Issue):} 

1. Country $i$ \emph{weakly prefers} strategy matrix $V$ over $U$ if the following two conditions both hold: 

\begin{description}
\item[a)] $(x_{j}(V) \in \{\text{safe}, \text{precarious}\})$ or $(x_{j}(U) = \text{unsafe})$ or both, $\forall j \in \{i\} \cup \mathcal{F}_{i}$\footnote{For convenience, it can equivalently be written as $(x_{j}(V) \in \{\text{safe}, \text{precarious}\}) \lor (x_{j}(U) = \text{unsafe})$, $\forall j \in \{i\} \cup \mathcal{F}_{i}$.}.
\item[b)] $(x_{j}(V) \in \{\text{unsafe}, \text{precarious}\})$ or $(x_{j}(U) = \text{safe})$ or both, $\forall j \in \mathcal{A}_{i}$.
\end{description}
As is standard, weak preference of $V$ over $U$ is denoted by $V \preceq U$ or by $U \preceq V$.

2.  Country $i$ is \emph{indifferent} to the choice between strategy matrices $V$ and $U$ if  $$x_{j}(U) = x_{j}(V), \forall j \in \{i\} \cup \mathcal{A}_{i} \cup \mathcal{F}_{i}.$$
Indifference between $U$ and $V$ is denoted by $U \sim V$.

A direct implication of Axiom 1 is that country $i$ weakly prefers $V$ over $U$ if country $j$'s states\footnote{$j$ may or may not have a relation with $i$.} induced by $U$ and $V$ satisfy one of the three respective conditions above, while all else is equal (i.e., the states of all (other) countries $i$ has a relation with are the same under $U$ and $V$).

It is also natural to assume that countries prioritize self-survival. This motivates Axiom 2:

{\bf Axiom 2 (Priority of Self-Survival):} 
Country $i$ \emph{strongly prefers} strategy matrix $V$ over $U$ if  $$(x_{i}(V) \in \{\text{safe}, \text{precarious}\}) ~\text{and}~ (x_{i}(U) = \text{unsafe}).\footnote{Similarly, it can be equivalently written as $(x_{i}(V) \in \{\text{safe}, \text{precarious}\}) \land (x_{i}(U) = \text{unsafe})$.}$$
Strong preference of $V$ over $U$ is denoted by $V \prec U$ or by $U \prec V$.

The two axioms determine \emph{a partial order} of the states in $\mathcal{X}$. An additional assumption about ad-hoc and country-specific attributes such as degrees of affinity with different friends (e.g., captured by cultural, trading, linguistic connections) is both necessary and sufficient for extending this partial order into a total order; a discussion of why this is so will appear in another paper. Moreover, Axiom 2 makes this partial order a ``lexicographic order'', with the further implication being that a continuous, real-valued utility function representation of the preference order of the states in $\mathcal{X}$ is impossible (see the discussion in \cite{sen2014collective}).



The PAG in a networked international environment is the collection of all the aforementioned elements, $\Gamma = \{\mathcal{C},  p, \mathcal{U}, \sigma_{i}, \tau_{i}, \mathcal{X}, \preceq\}$.  

For illustrative purpose, an edge-colored, undirected and unweighted graph on $n$ vertices and $m$ edges, $\mathbb{G}_{E} = (\mathcal{V}, \mathcal{E}_{E})$, is the ``environment graph'' of the PAG. An \emph{environment graph} represents $n$ countries with their total power labeled besides each of the $n$ vertices, $m$ pairs of which have a friend or adversary relation ($n, m \in \mathbb{Z}$). In $\mathbb{G}_{E}$, two nodes $v_{i}$ and $v_{j}$, which denote two countries $i$ and $j$ with a friend or adversary relation, are connected by an undirected edge $\{v_{i},v_{j}\}$, colored green if $i$ and $j$ are friends and red if $i$ and $j$ are adversaries. 

An edge-colored, directed and weighted graph on $n$ vertices and $2m$ edges, $\mathbb{G}_{A} = (\mathcal{V}, \mathcal{E}_{A})$, is the ``allocation graph'' of the PAG. An \emph{allocation graph} represents the power allocation of countries in this environment; i.e., a power allocation matrix. In $\mathbb{G}_{A}$, two nodes $v_{i}$ and $v_{j}$, which denote two countries $i$ and $j$ with a friend or adversary relation, are connected by two directed edges $(v_{i},v_{j})$ and $(v_{j},v_{i})$, colored green if $i$ and $j$ are friends and red if $i$ and $j$ are adversaries. The edge weight of $(v_{i},v_{j})$ is $u_{ij}$, and the node weight of $i$ is $u_{ii}$. Neither $\mathbb{G}_{E}$ nor $\mathbb{G}_{A}$ has to be connected. When they are unconnected, the PAGs on those connected components can be regarded as being unrelated.

The Nash equilibrium concept is naturally employed to make predictions for the PAG. Let country $i$'s deviation from the power allocation matrix $U$ be a nonnegative-valued $1\times n$ row vector $d_{i} \in \mathbb{R}^{1 \times n}$ such that $u_{i} + d_{i}$ is a valid strategy that satisfies the total power constraint for country $i$. The deviation set $\mathcal{\delta}_{i}(U)$ is the set of all possible deviations of country $i$ from the power allocation matrix $U$. In the context of a PAG, a power allocation matrix $U$ is a pure strategy Nash Equilibrium if no unilateral deviation in strategy by any single country $i$ is profitable for $i$, that is, 
\begin{equation*}
U + e_{i} d_{i} \preceq U, \;\;\;\;\; {\rm for\; all} \;\;\;\;\; d_{i} \in \mathcal{\delta}_{i}(U),
\end{equation*} 
where $e_{i}$ is an $n \times 1$ unit vector whose elements are $0$ but the $i$-th coordinate which is $1$.

Denote the set of pure strategy Nash equilibria as $\mathcal{U}^{*}$. An equivalence relation can be defined on $\mathcal{U}^{*}$ such that $U^{*}$ is equivalent to $V^{*}$ if and only if $x(U^{*}) = x(V^{*})$. Let $[\mathcal{U}^{*}]_{x(U^{*})}$ be the \emph{equilibrium equivalence class} of $U^{*} \in \mathcal{U}^{*}$. Obviously, the total number of equilibrium equivalence classes is at most $3^{n}$, and their union is  $\mathcal{U}^{*}$.

It has been established in \cite{allocation} that the PAG $\Gamma = \{\mathcal{C}, p, \mathcal{U}, \sigma_{i}, \tau_{i}, \mathcal{X}, \preceq\}$ always has a pure strategy Nash equilibrium.

\section{Countries' Survival}

This section explores the conditions for countries' survival, by discussing the first case in which they have friends in the environment, and then the second one in which they do not.

\subsection{Survival with Friends}

This section shows that a country's friends' outside obligations may tremendously affect its survival.

\subsubsection*{Example 1} The networked environment in which the PAG takes place is characterized by the following parameters:

\begin{enumerate}

\item Set of countries (labels): $\mathbf{n} = \{1,2,3, 4,5,6\}$.

\item Countries' total power:
$p = [p_{1}, p_{2}, p_{3}, p_{4}, p_{5}, p_{6}] = [19,3,6,15,3,9]$.

\item Countries' relations: $\mathcal{A}_{1} = \{4\}$, $\mathcal{A}_{2} = \{5\}$, $\mathcal{A}_{3} = \{6\}$, $\mathcal{F}_{2} = \{3\}$, $\mathcal{F}_{4} = \{5\}$, and all other possible pairs of countries have no relations.

\item Countries' preferences: Assume the two axioms.

\end{enumerate}

\begin{figure}[ht]

	\begin{subfigure}[t]{0.45\linewidth}
	\centering
	\begin{tikzpicture}[x=4em,y=-4em]
		\drawnodex{0,0}{v1}{left}{$v_{1}$}{above}{$19$}
		\drawnodex{0,1}{v2}{left}{$v_{2}$}{below}{$3$}
		\drawnodex{0,2}{v3}{left}{$v_{3}$}{below}{$6$}
		\drawnodex{2,0}{v4}{above}{$v_{4}$}{below}{$15$}
		\drawnodex{2,1}{v5}{below}{$v_{5}$}{above}{$3$}
		\drawnodex{2,2}{v6}{above}{$v_{6}$}{below}{$9$}
		\drawgreenedge{v2}{v3}
		\drawrededge{v3}{v6}
		\drawrededge{v2}{v5}
		\drawrededge{v1}{v4}
		\drawgreenedge{v4}{v5}
	\end{tikzpicture}\qquad
	\caption{Environment}
	\label{fail1}
	\end{subfigure}
   \begin{subfigure}[t]{0.45\linewidth}
	\centering
	\begin{tikzpicture}[x=4em,y=-4em]
		\drawnodex{0,0}{v1}{left}{$v_{1}$}{above}{$0$}
		\drawnodex{0,1}{v2}{left}{$v_{2}$}{below}{$0$}
		\drawnodex{0,2}{v3}{left}{$v_{3}$}{below}{$0$}
		\drawnodex{2,0}{v4}{right}{$v_{4}$}{above}{$0$}
		\drawnodex{2,1}{v5}{below}{$v_{5}$}{above}{$0$}
		\drawnodex{2,2}{v6}{above}{$v_{6}$}{below}{$0$}
		\drawallyx{v2}{v3}{$0$}{$0$}
		\drawfoex{v3}{v6}{$6$}{$9$}
		\drawfoex{v2}{v5}{$3$}{$3$}
		\drawfoex{v1}{v4}{$19$}{$15$}
		\drawallyx{v4}{v5}{$0$}{$0$}
	\end{tikzpicture}
	\caption{Allocation}
		\label{fail2}
	\end{subfigure}
    \caption{Friends' Outside Commitments}
    \label{fail}
\end{figure}
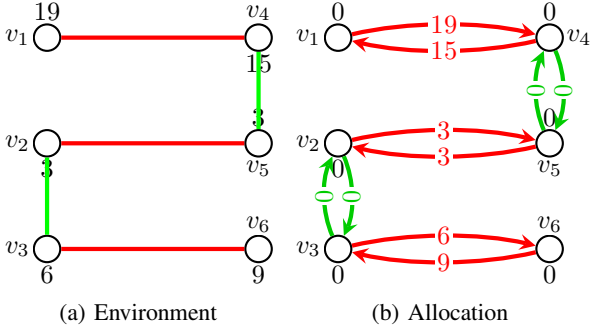

Figure~\ref{fail1} illustrates a two-sided conflict, where countries from one side are either adversaries or not having specific relations with countries from the other side. Figure~\ref{fail2} shows the (only) equilibrium of the corresponding PAG where neither country 3 nor country 4 benefits from its friend relation -- their friends, country 2 and country 4 respectively, have to use all of their power for protecting themselves. Therefore, it is always necessary to consider a country's friends' own power and relations in order to understand to what extent the power of the friends are actually available for supporting itself.

This example motivates two solutions to the ineffectiveness of alliances by which countries may achieve survival in any equilibrium of the PAG, formalized in Theorems~\ref{prop/balance} and \ref{prop/clique}, respectively.

%
%

The following result establishes a sufficient condition under which a group of countries can survive, assuming they are not adversarial with each other.

\begin{theorem}
\label{prop/balance} In the PAG $\Gamma = \{\mathcal{C},  p, \mathcal{U}, \sigma_{i}, \tau_{i}, \mathcal{X}, \preceq\}$, a nonempty set of countries $\mathbf{n}_{0} \subset \mathbf{n}$ with no adversary relation among them will survive in any Nash equilibrium if for each of them, its total power is no smaller than that of all its adversaries.  
In other words, if for each $i \in \mathbf{n}_{0}$, there holds 
$$\mathcal{A}_{i} \bigcap \mathbf{n}_{0} = \emptyset, \;\;\;\;\;
p_{i} \geq \sum\limits_{j\in \mathcal{A}_{i}}p_{j},$$ 
then $\sigma_{i}(U^{*}) \geq \tau_{i}(U^{*})$ for all $i\in\mathbf{n}_0$ and $U^{*} \in \mathcal{U}^{*}$.
\end{theorem}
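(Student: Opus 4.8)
The plan is a short contradiction argument: at any Nash equilibrium, a country in $\mathbf{n}_{0}$ that is unsafe would possess a strictly improving unilateral deviation, which Axiom~2 forbids. So I would fix a Nash equilibrium $U^{*}\in\mathcal{U}^{*}$ and suppose, for contradiction, that some $i\in\mathbf{n}_{0}$ has $x_{i}(U^{*})=\text{unsafe}$, i.e. $\sigma_{i}(U^{*})<\tau_{i}(U^{*})$.

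The two facts I would isolate first are: (i) $\tau_{i}(U)=\sum_{j\in\mathcal{A}_{i}}u_{ji}$ depends only on the rows of $U$ controlled by $i$'s adversaries, hence is invariant under any deviation by country $i$; and (ii) since $U^{*}$ is entrywise nonnegative and each row $j$ sums to $p_{j}$, we have $u^{*}_{ji}\le p_{j}$ for every $j\in\mathcal{A}_{i}$, so $\tau_{i}(U^{*})=\sum_{j\in\mathcal{A}_{i}}u^{*}_{ji}\le\sum_{j\in\mathcal{A}_{i}}p_{j}\le p_{i}$, the last step being exactly the hypothesis on $i$.

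Then I would exhibit the deviation: let $V$ be obtained from $U^{*}$ by replacing its $i$-th row with $p_{i}e_{i}^{\top}$, i.e. country $i$ holds all its power in reserve. This is an admissible strategy matrix (nonnegative, $i$-th row summing to $p_{i}$, zero weight on null-relation countries as stipulated). By (i) all other rows and all threat values are unchanged, in particular $\tau_{i}(V)=\tau_{i}(U^{*})$, while $\sigma_{i}(V)=p_{i}+\sum_{j\in\mathcal{F}_{i}}u^{*}_{ji}\ge p_{i}$. Combining with (ii), $\sigma_{i}(V)\ge p_{i}\ge\tau_{i}(U^{*})=\tau_{i}(V)$, so $x_{i}(V)\in\{\text{safe},\text{precarious}\}$ whereas $x_{i}(U^{*})=\text{unsafe}$; by Axiom~2 country $i$ strongly prefers $V$ to $U^{*}$, so the move from $U^{*}$ to $V$ is a profitable unilateral deviation, contradicting $U^{*}\in\mathcal{U}^{*}$. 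Hence $\sigma_{i}(U^{*})\ge\tau_{i}(U^{*})$ for every $i\in\mathbf{n}_{0}$ and every $U^{*}\in\mathcal{U}^{*}$.

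I do not expect a genuine obstacle; the delicate points are bookkeeping rather than conceptual: checking that the all-in-reserve row is a legal strategy, that $\tau_{i}$ is truly insensitive to country $i$'s own allocation, and that the chain $\sigma_{i}(V)\ge p_{i}\ge\sum_{j\in\mathcal{A}_{i}}p_{j}\ge\tau_{i}(U^{*})$ together with the direction of the Nash/Axiom~2 preference is assembled correctly. Worth remarking — and I would note it in the proof — is that the assumption $\mathcal{A}_{i}\cap\mathbf{n}_{0}=\emptyset$ is not actually invoked above: the per-country power dominance $p_{i}\ge\sum_{j\in\mathcal{A}_{i}}p_{j}$ alone yields the conclusion. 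That assumption is a natural modeling restriction (without it, two members of $\mathbf{n}_{0}$ that were mutual adversaries would be forced by the inequalities to have equal power), but it is not load-bearing for the argument.
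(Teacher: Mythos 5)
Your proof is correct, and it follows the same underlying mechanism as the paper's: the chain $\tau_{i}(U^{*})=\sum_{j\in\mathcal{A}_{i}}u^{*}_{ji}\le\sum_{j\in\mathcal{A}_{i}}p_{j}\le p_{i}$ together with Axiom~2 applied to a retract-everything deviation. The difference is organizational but worth noting. The paper splits into the cases $\sigma_{i}(U^{*})<p_{i}$ and $\sigma_{i}(U^{*})\ge p_{i}$: in the second case it concludes directly from $\sigma_{i}(U^{*})\ge p_{i}\ge\sum_{j\in\mathcal{A}_{i}}p_{j}\ge\tau_{i}(U^{*})$ with no contradiction needed, and in the first case it argues that an unsafe country diverting power to friends contradicts Axiom~2 --- but it never explicitly exhibits the deviation or verifies that retracting friend-support actually rescues $i$ (which itself requires the power-dominance hypothesis, not invoked in that case). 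Your single unified contradiction argument, with the explicit all-in-reserve row $p_{i}e_{i}^{\top}$ and the observation that $\tau_{i}$ is invariant under $i$'s own deviation, closes that small gap and is the cleaner write-up. Your remark that $\mathcal{A}_{i}\cap\mathbf{n}_{0}=\emptyset$ is not load-bearing is also accurate: the paper nominally invokes it to get $\sum_{j\in\mathcal{A}_{i}}p_{j}\ge\tau_{i}(U^{*})$, but that inequality follows from the row-sum constraint $u^{*}_{ji}\le p_{j}$ alone, exactly as you say. One pedantic caveat: the paper defines a deviation $d_{i}$ as a \emph{nonnegative} vector, which read literally would exclude your deviation (and indeed any nontrivial one, since row sums are fixed at $p_{i}$); this is plainly a slip in the paper's definition rather than a flaw in your argument, but you might flag that you are interpreting admissible deviations as arbitrary replacements of row $i$ by a valid strategy.
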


It is worth noting that Theorem~\ref{prop/balance} holds regardless of whether the countries in $\mathbf{n}_0$ have friends or not.

%
%
%


{\em Proof:}
Let $U^{*} \in \mathcal{U}^{*}$. Consider an arbitrary country $i\in\mathbf{n}$. 
We consider two cases $\sigma_{i}(U^{*}) < p_{i}$ and $\sigma_{i}(U^{*}) \geq p_{i}$, separately. 

First, suppose that $\sigma_{i}(U^{*}) < p_{i}$. 
We claim that $\sigma_{i}(U^{*}) \ge \tau_{i}(U^{*})$. To establish the claim, suppose that, to the contrary, $\sigma_{i}(U^{*}) < \tau_{i}(U^{*})$. Since $\sigma_{i}(U^{*}) < p_{i}$, it implies that country $i$ allocates its power to support its friend(s) while being  unsafe itself, which contradicts Axiom~2. Thus, $\sigma_{i}(U^{*}) \ge \tau_{i}(U^{*})$, i.e., country $i$ can survive in this case.

Next, suppose that $\sigma_{i}(U^{*}) \ge p_{i}$.
Since country $i$'s adversaries are not in the set $\mathbf{n}_{0}$, 
it follows that 
$$\sum_{j \in \mathcal{A}_{i}}p_{j} \geq \tau_{i}(U^{*}).$$ 
Since country $i$'s total power is no smaller than that of all its adversaries, it follows that 
$$\sigma_{i}(U^{*}) \geq p_{i} \geq \sum_{j \in \mathcal{A}_{i}}p_{j} \geq \tau_{i}(U^{*}),$$
which implies that country $i$ can also survive in this case.
\hfill$\qed$

The next theorem treats another scenario in which the group of countries considered are friends with each other.

\begin{theorem}
\label{prop/clique} In the PAG $\Gamma = \{\mathcal{C},  p, \mathcal{U}, \sigma_{i}, \tau_{i}, \mathcal{X}, \preceq\}$, a nonempty set of countries $\mathbf{n}_{0} \subset \mathbf{n}$ will survive in any Nash equilibrium if they are friends with one another and their total power is no smaller than that of their adversaries.  In other words, if
$$j \in \mathcal{F}_{i}, \;\;\;{\rm for \; all} \;\;\; i,j \in \mathbf{n}_{0},$$ and
$$\sum\limits_{i \in \mathbf{n}_{0}}p_{i} \geq \sum\limits_{j\in\mathcal{A}_{\mathbf{n}_0}}p_{j}, \;\;\; \text{where} \;\;\; \mathcal{A}_{\mathbf{n}_0} = \bigcup\limits_{i \in \mathbf{n}_{0}} \mathcal{A}_{i},$$
then $\sigma_{i}(U^{*}) \geq \tau_{i}(U^{*})$ 
for all $i\in\mathbf{n}$ and $U^{*} \in \mathcal{U}^{*}$.
\end{theorem}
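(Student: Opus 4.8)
The plan is a proof by contradiction. Suppose $U^{*}\in\mathcal{U}^{*}$ and some country of $\mathbf{n}_{0}$ is unsafe, and let $S=\{\,i\in\mathbf{n}_{0}:\sigma_{i}(U^{*})<\tau_{i}(U^{*})\,\}$, which is then nonempty. The first step mirrors the proof of Theorem~\ref{prop/balance}: for any country $i$ the threat $\tau_{i}$ does not depend on row $i$ of the allocation matrix, while by putting all of its power into reserve and into offense against its adversaries country $i$ can raise $\sigma_{i}$ up to $p_{i}+\sum_{j\in\mathcal{F}_{i}}u^{*}_{ji}$. Hence, since $U^{*}$ is an equilibrium, for each $i\in S$ Axiom~2 forces
$$p_{i}+\sum_{j\in\mathcal{F}_{i}}u^{*}_{ji}<\tau_{i}(U^{*}),$$
for otherwise $i$ would have a strictly preferred unilateral deviation making $i$ safe or precarious.

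Next I would write a power budget for the clique. From $\sigma_{i}(U^{*})=p_{i}-\sum_{j\in\mathcal{F}_{i}}u^{*}_{ij}+\sum_{j\in\mathcal{F}_{i}}u^{*}_{ji}$ (reserve plus offense equals $p_{i}$ minus what $i$ gives its friends, plus what $i$ gets from its friends) and the observation that, since $\mathbf{n}_{0}$ is a clique, every unit of power moving between two members of $\mathbf{n}_{0}$ is counted once as given to a friend and once as received from a friend, the intra-clique terms cancel on summation and
$$\sum_{i\in\mathbf{n}_{0}}\sigma_{i}(U^{*})=\sum_{i\in\mathbf{n}_{0}}p_{i}-G+F,$$
where $G$ and $F$ are the total power the members of $\mathbf{n}_{0}$ respectively send to, and receive from, their friends outside $\mathbf{n}_{0}$. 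On the threat side, $\sum_{i\in\mathbf{n}_{0}}\tau_{i}(U^{*})=\sum_{i\in\mathbf{n}_{0}}\sum_{j\in\mathcal{A}_{i}}u^{*}_{ji}\le\sum_{j\in\mathcal{A}_{\mathbf{n}_{0}}}p_{j}\le\sum_{i\in\mathbf{n}_{0}}p_{i}$, the first inequality because every $j\in\mathcal{A}_{\mathbf{n}_{0}}$ lies outside $\mathbf{n}_{0}$ and has row sum $p_{j}$, the second being the hypothesis.

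The crux, which I expect to be the main obstacle, is to show that at an equilibrium with $S\neq\emptyset$ the clique carries no slack, namely: (i) $G=0$, since a member sending power to an outside friend would, by Axiom~1, strictly prefer to redirect it to a member of $S$ --- one must still dispose of the delicate case in which that outside friend is herself endangered by the redirection; and (ii) every surviving member $k\in\mathbf{n}_{0}\setminus S$ in fact has $\sigma_{k}(U^{*})=\tau_{k}(U^{*})$, for otherwise Axiom~1 (preference for a friend's survival) together with Axiom~2 (priority of self-survival) would give $k$ a strictly preferred deviation passing its surplus toward $S$. Granting (i) and (ii), the budget identity gives $\sum_{i\in\mathbf{n}_{0}}\sigma_{i}(U^{*})=\sum_{i\in\mathbf{n}_{0}}p_{i}+F\ge\sum_{i\in\mathbf{n}_{0}}\tau_{i}(U^{*})$, while summing the defining inequality of $S$ over $S$ and using (ii) on $\mathbf{n}_{0}\setminus S$ gives $\sum_{i\in\mathbf{n}_{0}}\sigma_{i}(U^{*})<\sum_{i\in\mathbf{n}_{0}}\tau_{i}(U^{*})$ --- a contradiction. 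The truly delicate point is that a transfer covering only part of a friend's shortfall changes no country's state and is therefore only weakly preferred, so (i) and (ii) cannot be obtained one deviation at a time; they must be argued through a maximal rescuing reallocation of the clique's pooled power, and it is precisely here that ``$\mathbf{n}_{0}$ is a clique of friends'' (so that power may be shuttled freely within $\mathbf{n}_{0}$) has to be used in full, with the best-response inequalities from the first step supplying the per-member bounds needed to run the argument.
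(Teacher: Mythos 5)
Your skeleton is sound up to a point, and your first step (for each unsafe $i\in S$, the best unilateral response gives $p_{i}+\sum_{j\in\mathcal{F}_{i}}u^{*}_{ji}<\tau_{i}(U^{*})$, since otherwise Axiom~2 yields a strictly preferred self-rescue) is exactly the argument the paper uses in Theorem~\ref{prop/balance}. The budget identity $\sum_{i\in\mathbf{n}_{0}}\sigma_{i}(U^{*})=\sum_{i\in\mathbf{n}_{0}}p_{i}-G+F$ and the bound $\sum_{i\in\mathbf{n}_{0}}\tau_{i}(U^{*})\le\sum_{j\in\mathcal{A}_{\mathbf{n}_{0}}}p_{j}$ are also correct. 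But the proof does not close: everything rests on your claims (i) $G=0$ and (ii) $\sigma_{k}(U^{*})=\tau_{k}(U^{*})$ for every survivor $k\in\mathbf{n}_{0}\setminus S$, and you prove neither. Claim (ii) in particular is not a consequence of the equilibrium conditions as you invoke them: a survivor whose transferable surplus is strictly smaller than the shortfall $\tau_{i}(U^{*})-\sigma_{i}(U^{*})$ of every unsafe friend has no unilateral deviation that changes any country's state, so Axioms~1 and~2 give it nothing strictly (or even strictly-weakly) better than $U^{*}$, and nothing forces its slack to zero. You acknowledge this yourself, but the fix you gesture at --- a ``maximal rescuing reallocation of the clique's pooled power'' --- is intrinsically a simultaneous deviation by several members of $\mathbf{n}_{0}$, whereas the Nash condition only quantifies over deviations $U^{*}+e_{i}d_{i}$ of a single row. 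So the contradiction $\sum\sigma<\sum\tau$ versus $\sum\sigma\ge\sum\tau$ is never actually reached; the crux you flagged is a genuine gap, not a routine detail.

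For comparison, the paper's proof takes a shorter route that bypasses (i) and (ii) entirely: it lets $\mathcal{E}$ be the unsafe subset of $\mathbf{n}_{0}$, shows $\mathcal{E}$ is proper, rearranges the hypothesis into $\sum_{i\in\mathbf{n}_{0}\setminus\mathcal{E}}p_{i}-\sum_{j\in\mathcal{A}_{\mathbf{n}_{0}}\setminus\mathcal{A}_{\mathcal{E}}}p_{j}\ge\sum_{j\in\mathcal{A}_{\mathcal{E}}}p_{j}-\sum_{i\in\mathcal{E}}p_{i}\ge\sum_{i\in\mathcal{E}}\bigl(\tau_{i}(U^{*})-\sigma_{i}(U^{*})\bigr)$, and concludes that the friends of $\mathcal{E}$ can transfer that much power to rescue $\mathcal{E}$ without becoming unsafe, contradicting equilibrium. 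Note that this final step is itself stated for ``the friends'' collectively, so the unilaterality issue you ran into is the same one the paper passes over quickly; but the paper's aggregation needs only that the survivors' total surplus covers the total shortfall, whereas your route additionally needs the much stronger and unestablished zero-slack claims. If you want to complete your version, you must either produce a single country with a state-changing profitable deviation or bring in the adversaries' equilibrium conditions; as written, the argument stops exactly where the theorem's content begins.
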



{\em Proof:}
Let $U^{*}$ be any Nash equilibrium in $\mathcal{U}^{*}$. 
To prove that all countries in $\mathbf{n}_{0}$ can survive at $U^{*}$ (i.e., $\sigma_{k}(U^{*}) \geq \tau_{k}(U^{*})$ for all $k\in\mathbf{n}_0$), suppose that, to the contrary, there exists at least one country that is unsafe. 
Let $\scr E$ be the set of those unsafe countries. 
That is, $\sigma_{i}(U^{*})  < \tau_{i}(U^{*})$ for each $i\in\scr E$. 
It is clear that $\scr E\subset \mathbf{n}_0$.  
Since the total power of countries in $\mathbf{n}_{0}$ is no smaller than that of their adversaries, it is impossible that all countries in $\mathbf{n}_{0}$ are unsafe. Thus, $\scr E$ must be a proper subset of $\mathbf{n}_0$, which implies that $\mathbf{n}_0\setminus\scr E$ is nonempty.

Since
$$\sum\limits_{i \in \mathbf{n}_{0}}p_{i} \geq \sum\limits_{j\in\mathcal{A}_{\mathbf{n}_0}}p_{j},$$
it follows that 
$$\sum_{i \in \mathbf{n}_{0}\setminus \scr E} p_{i} +  \sum_{i \in \scr E} p_{i} \geq 
\sum\limits_{j \in \mathcal{A}_{\mathbf{n}_0} \setminus \mathcal{A}_{\scr E}}p_{k} + \sum_{j \in \mathcal{A}_{\scr E}}p_{j},$$
where 
$$\mathcal{A}_{\scr E} = \bigcup\limits_{i \in \scr E} \scr A_i.$$
Rearranging the terms of the above inequality, we have   
$$\sum_{i \in \mathbf{n}_{0}\setminus \scr E} p_{i} 
- \sum\limits_{j \in \mathcal{A}_{\mathbf{n}_0} \setminus \mathcal{A}_{\scr E}}p_{k} 
\geq 
\sum_{j \in \mathcal{A}_{\scr E}}p_{j}
- \sum_{i \in \scr E} p_{i}.$$
Since the amount of threats from a country's (or a set of countries') adversaries against the country (or the set of countries) cannot exceed its (or their) total power, it follows that 
$$\sum_{j \in \mathcal{A}_{\scr E}}p_{j} - \sum_{i\in\scr E} p_{i} \geq 
\sum_{i\in\scr E} \left(\tau_{i}(U^{*}) - \sigma_{i}(U^{*})\right).$$
Then, the friends of the countries in $\scr E$ can deviate by transferring at most $\sum_{i\in\scr E} (\tau_{i}(U^{*}) - \sigma_{i}(U^{*}))$ amount of power for supporting the countries in $\scr E$ to avoid being \text{unsafe}, without becoming \text{unsafe} themselves. 
This contradicts the fact that $U^{*}$ is a Nash equilibrium. Thus, all countries in $\mathbf{n}_{0}$ survive at $U^{*}$.
\hfill$\qed$

\subsection{Survival without Friends}

Countries' survival issue in environments where no countries have any friends will now be examined; specifically, two possible networked environments will be discussed, with the first being where the adversary relations constitute a \emph{complete graph}, and the second being where the adversary relations constitute a \emph{bipartite graph}.

\subsubsection*{Example 2}
The networked environment in which the PAG takes place is characterized by the following parameters:
\begin{enumerate}
\item Set of countries (labels): $\mathbf{n} = \{1,2,3\}$
\item Countries' total power: $p = [p_{1},p_{2},p_{3}] = [8, 6,4]$.

\item Countries' relations: $\mathcal{A}_{i} = \mathbf{n}\setminus \{i\}$, $i \in \mathbf{n}$

\item Countries' preferences: Assume the two axioms.

\end{enumerate}

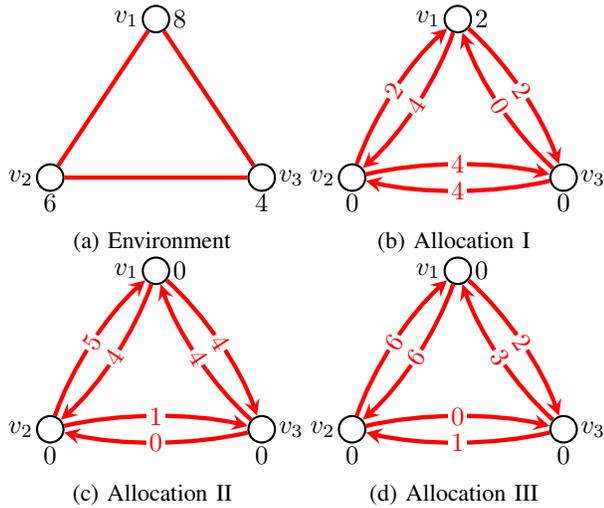
\begin{figure}[ht]
\centering
\begin{subfigure}[t]{0.45\linewidth}
	\centering
	\begin{tikzpicture}[x=4em,y=-4em]
		\drawnodex{0,0}{v1}{left}{$v_{1}$}{right}{$8$}
		\drawnodex{-1,1.5}{v2}{left}{$v_{2}$}{below}{$6$}
		\drawnodex{1,1.5}{v3}{right}{$v_{3}$}{below}{$4$}
				
		\drawrededge{v1}{v3}
		\drawrededge{v2}{v3}
		\drawrededge{v1}{v2}
	\end{tikzpicture}
	\caption{Environment}
	\label{f1}
\end{subfigure}
\begin{subfigure}[t]{0.45\linewidth}
	\centering
	\begin{tikzpicture}[x=4em,y=-4em]
		\drawnodex{0,0}{v1}{left}{$v_{1}$}{right}{$2$}
		\drawnodex{-1,1.5}{v2}{left}{$v_{2}$}{below}{$0$}
		\drawnodex{1,1.5}{v3}{right}{$v_{3}$}{below}{$0$}
				
		\drawfoex{v1}{v3}{$2$}{$0$}
		\drawfoex{v2}{v3}{$4$}{$4$}
		\drawfoex{v1}{v2}{$4$}{$2$}
	\end{tikzpicture}
	\caption{Allocation I}
	\label{f2}
\end{subfigure}
\begin{subfigure}[t]{0.45\linewidth}
	\centering
	\begin{tikzpicture}[x=4em,y=-4em]
		\drawnodex{0,0}{v1}{left}{$v_{1}$}{right}{$0$}
		\drawnodex{-1,1.5}{v2}{left}{$v_{2}$}{below}{$0$}
		\drawnodex{1,1.5}{v3}{right}{$v_{3}$}{below}{$0$}
				
		\drawfoex{v1}{v3}{$4$}{$4$}
		\drawfoex{v2}{v3}{$1$}{$0$}
		\drawfoex{v1}{v2}{$4$}{$5$}
	\end{tikzpicture}
	\caption{Allocation II}
	\label{f3}
\end{subfigure}
\begin{subfigure}[t]{0.45\linewidth}
	\centering
	\begin{tikzpicture}[x=4em,y=-4em]
		\drawnodex{0,0}{v1}{left}{$v_{1}$}{right}{$0$}
		\drawnodex{-1,1.5}{v2}{left}{$v_{2}$}{below}{$0$}
		\drawnodex{1,1.5}{v3}{right}{$v_{3}$}{below}{$0$}
				
		\drawfoex{v1}{v3}{$2$}{$3$}
		\drawfoex{v2}{v3}{$0$}{$1$}
		\drawfoex{v1}{v2}{$6$}{$6$}
	\end{tikzpicture}
	\caption{Allocation III}
	\label{f4}
\end{subfigure}
\caption{Unique Survivor in Each Outcome}
\label{contrast}
\end{figure}

This example shows three equilibrium outcomes, each of which has respectively country 1, 2 and 3 as \emph{the only survivor}. These three equilibrium outcomes are $[\text{safe}, \text{unsafe}, \text{unsafe}]$, $[\text{unsafe}, \text{safe}, \text{unsafe}]$, and $[\text{unsafe}, \text{unsafe}, \text{safe}]$.  Obviously, other countries than the survivor have exhausted all their power in the antagonism with their (other) adversaries.

Motivated by the above example, Theorem~\ref{prop/survive} provides a sufficient condition for the existence of a Nash equilibrium  in the PAG where only a country can survive in an environment where no country has any friend and, in particular, every country is adversary with one another.  To prove this theorem,  the following concept more extensively discussed in \cite{balancing} will be used:

A Nash equilibrium of the PAG, $U^*\in\scr U^*$, is called a {\em balancing equilibrium} if at the equilibrium, every country has to use all its power only on offending its foes, and every country's offense toward every foe is just equal to the offense received from the foe; consequently, every country's total support just balances out its total threats.  
That is, for all $i\in\mathbf n$, there holds 
$u^{*}_{ij} = u_{ji}^{*}$, $j \in \mathcal{A}_{i}$ and $\sum_{j \in \mathcal{A}_{i}}u^{*}_{ij} = p_{i}$. Therefore, $\sigma_{i}(U^{*})  = \tau_{i}(U^{*}) (\text{or}~ x_{i}(U^{*}) = \text{precarious})$.

The following lemma (the proof of which is in \cite{balancing}) provides a condition for the existence of a ``balancing equilibrium''. 



\begin{lemma}
\label{prop/complete-BE}
In the PAG $\Gamma = \{\mathcal{C},  p, \mathcal{U}, \sigma_{i}, \tau_{i}, \mathcal{X}, \preceq\}$ with at least three countries, and all of the countries are adversaries with each other, \footnote{
Lemma~\ref{prop/complete-BE} also holds in the trivial case when there are only two countries.} 
a balancing equilibrium exists if and only if each country's power is no greater than the total power of its adversaries. In other words, suppose $i \in \mathbf{n}$ and $\mathcal{A}_{i} = \mathbf{n} \setminus \{i\}$.  A balancing equilibrium $U^{*} \in \mathcal{U}^{*}$ at which 
$$\forall i \in \mathbf{n}, u_{ii} = 0;$$ $$\forall i \in \mathbf{n}, j \in \mathcal{A}_{i}, u_{ij} = u_{ji}; ~\text{and}~ \sum_{j \in \mathcal{A}_{i}}u_{ij} = p_{i}$$ exists if and only if $$\forall i \in \mathbf{n}, p_{i} \leq \sum_{j \in \mathcal{A}_{i}}p_{j}.$$

\end{lemma}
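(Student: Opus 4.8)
\emph{Proof proposal.} The plan is to translate the existence of a balancing equilibrium into a purely combinatorial question about realizing a degree sequence by a weighted graph, settle that question, and then check that the resulting allocation actually is a Nash equilibrium. Note first that here every country has $\mathcal{A}_i=\mathbf{n}\setminus\{i\}$ and $\mathcal{F}_i=\emptyset$, so a balancing equilibrium is exactly a matrix $U$ with $u_{ii}=0$, $u_{ij}=u_{ji}\ge 0$ for $i\ne j$, and $\sum_{j\ne i}u_{ij}=p_i$ for every $i$. Writing $w_{ij}=u_{ij}=u_{ji}$, this is precisely a nonnegative weighting of the edges of the complete graph $K_n$ whose weighted degree at vertex $i$ equals $p_i$. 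Thus the lemma reduces to the claim that such a weighting exists if and only if $p_i\le\sum_{j\ne i}p_j$ for all $i$ (equivalently $\max_i p_i\le\frac{1}{2}\sum_i p_i$), together with the observation that any such weighting is a Nash equilibrium.

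Necessity is immediate: if $U$ is a balancing equilibrium then $p_i=\sum_{j\ne i}u_{ji}$, and each summand satisfies $u_{ji}\le\sum_{k\ne j}u_{jk}=p_j$ since the entries are nonnegative and $i\in\mathcal{A}_j$; summing over $j\ne i$ gives $p_i\le\sum_{j\in\mathcal{A}_i}p_j$.

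For sufficiency I would build the edge weighting explicitly from the hypothesis $\max_i p_i\le S/2$, where $S=\sum_i p_i$. One clean way: lay the countries' powers out as consecutive arcs $I_1,\dots,I_n$ of lengths $p_1,\dots,p_n$ covering a circle of circumference $S$; for each point $x$ in the first half of the circle pair $x$ with its antipode $x+S/2$, and whenever $x\in I_i$ and $x+S/2\in I_j$ add the infinitesimal $dx$ to $w_{ij}$. The bound $p_i\le S/2$ guarantees $I_i$ is disjoint from its antipodal arc, so no weight is ever assigned to a self-loop; and since every point of $I_i$ is paired with exactly one other point, the total weight incident to $i$ equals $|I_i|=p_i$. (An induction on $n$ — peeling off the two smallest powers, with a separate treatment when $\max_i p_i$ is near $S/2$ in which case one saturates the largest country against all others — gives the same conclusion and may be preferable to write up rigorously.)

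Finally I would verify that the allocation $U$ so constructed lies in $\mathcal{U}^{*}$, i.e. $U+e_i d_i\preceq U$ for every admissible deviation $d_i$ of every country $i$. The crucial point is that at $U$ every country is precarious, $\sigma_k(U)=\tau_k(U)=p_k$, hence survives. For a deviation $V=U+e_i d_i$: country $i$ has no friends and retains total power $p_i$, so $\sigma_i(V)=u_{ii}'+\sum_{j\in\mathcal{A}_i}u_{ij}'=p_i=\tau_i(V)$ and $x_i(V)$ is again precarious; therefore Axiom 2 never makes $V$ strongly preferred to $U$. For Axiom 1, "country $i$ weakly prefers $U$ over $V$" requires, on $\{i\}\cup\mathcal{F}_i=\{i\}$, that $x_i(U)\in\{\text{safe},\text{precarious}\}$, which holds, and on each $j\in\mathcal{A}_i$ that $x_j(U)\in\{\text{unsafe},\text{precarious}\}$, which holds since $x_j(U)$ is precarious — irrespective of what $V$ does to $j$. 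So $V\preceq U$ for every deviation and $U$ is a Nash equilibrium. The main obstacle is the sufficiency construction: if one avoids the circular "folding" picture, the delicate case in the inductive alternative is when the largest power nearly attains $S/2$, since naively deleting two vertices can destroy the inequality $\max\le\frac{1}{2}\sum$ for the reduced instance.
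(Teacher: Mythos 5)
Your proposal is correct, but note that this paper does not actually prove Lemma~\ref{prop/complete-BE} at all --- it explicitly defers the proof to the companion reference \cite{balancing} --- so there is no in-paper argument to compare against line by line. What you supply is a self-contained proof: the reduction of a balancing equilibrium to a symmetric, zero-diagonal, nonnegative matrix with prescribed row sums $p_i$ is exactly right, your necessity argument (each $u_{ji}\le p_j$, sum over $j\in\mathcal{A}_i$) is the standard one, and the antipodal ``folding'' construction on a circle of circumference $S=\sum_i p_i$ is a clean, correct realization of the degree sequence precisely under the condition $\max_i p_i\le S/2$, which is equivalent to the lemma's hypothesis. Your equilibrium verification is also sound and uses the right observation: since every country is precarious at $U$ and a unilateral deviation by $i$ cannot change $\sigma_i$ or $\tau_i$ (no friends, fixed budget $p_i$, and the adversaries' allocations toward $i$ are untouched), condition (b) of Axiom~1 holds vacuously through the precarious states of the adversaries, so $V\preceq U$ for every deviation and Axiom~2 never fires. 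Two small polish points if you write this up: replace the ``infinitesimal $dx$'' language by defining $w_{ij}$ as the Lebesgue measure of $\{x\in I_i: x+S/2\in I_j\}$ (symmetry and the row-sum identity then follow from the measure-preserving antipodal map and the fact that $I_i\cap(I_i+S/2)$ is null when $p_i\le S/2$); and you can drop the inductive alternative entirely, since, as you yourself note, its delicate case (largest power near $S/2$) is exactly where it gets awkward while the circle construction handles it uniformly.
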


\begin{theorem}
\label{prop/survive}
In the PAG $\Gamma = \{\mathcal{C},  p, \mathcal{U}, \sigma_{i}, \tau_{i}, \mathcal{X}, \preceq\}$, if all of the countries are adversaries with each other\footnote{
That is, adversary relations constitute a complete graph.} 
and each country's power is strictly less than the total power of all its adversaries,
there exists a Nash equilibrium at which a country and \emph{only} this country can survive (and only this country is safe.) 
In other words,  if for all $i \in \mathbf{n}$, there holds 
$$\mathcal{A}_{i} = \mathbf{n} \setminus \{i\} \;\;\;\;\; 
\text{and} \;\;\;\;\; p_{i} < \sum_{j \in \mathcal{A}_{i}}p_{j},$$ 
then there exists a Nash equilibrium $U^{*} \in \mathcal{U}^{*}$ at which 
$\sigma_{i_0}(U^{*})  > \tau_{i_0}(U^{*})$ and
$\sigma_{j}(U^{*}) < \tau_{j}(U^{*})$ for all $j \in \mathbf{n} \setminus \{i_0\}$, $i_{0} \in \mathbf{n}$
\end{theorem}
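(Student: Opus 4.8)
\emph{Proof idea.}
Fix a country $i_0$ with $p_{i_0}>0$ (the hypothesis forces $\sum_j p_j>0$, so such a country exists); write $\mathbf{n}':=\mathbf{n}\setminus\{i_0\}$, and note $n\ge 3$ and hence $|\mathbf{n}'|=n-1\ge 2$, since $n\le 2$ is impossible under the hypothesis. The plan is to exhibit an allocation $U^*\in\mathcal{U}$ in which $i_0$ spends its entire power attacking its adversaries, while every country in $\mathbf{n}'$ spends its entire power attacking only other members of $\mathbf{n}'$, holding nothing in reserve and directing nothing at $i_0$. Then $\tau_{i_0}(U^*)=0<p_{i_0}=\sigma_{i_0}(U^*)$, so $i_0$ is safe for free; what must be arranged inside $\mathbf{n}'$ is that $\tau_j(U^*)>p_j$ for every $j\in\mathbf{n}'$, and what must then be verified is that $U^*$ is a Nash equilibrium.

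For the within-$\mathbf{n}'$ configuration, observe that among the members of $\mathbf{n}'$ there is at most one country $j^*$ whose power exceeds the combined power of the other members of $\mathbf{n}'$. If there is none, Lemma~\ref{prop/complete-BE}, applied to the mutually adversarial collection $\mathbf{n}'$, yields a balancing sub-allocation in which each $j\in\mathbf{n}'$ receives threat exactly $p_j$ from inside $\mathbf{n}'$; letting $i_0$ add a strictly positive amount against each $j$ (summing to $p_{i_0}$) then gives $\tau_j(U^*)>p_j$ for all $j$. If $j^*$ exists, I would run the same construction with $p_{j^*}$ artificially lowered to the combined power of the remaining members of $\mathbf{n}'$ (so Lemma~\ref{prop/complete-BE} applies, now with an equality), let $j^*$ pour its surplus into additional attacks inside $\mathbf{n}'$, and let $i_0$ concentrate against $j^*$ an amount $a_{j^*}$ exceeding that surplus; this is possible precisely because the surplus $p_{j^*}-\sum_{k\in\mathbf{n}'\setminus\{j^*\}}p_k$ is smaller than $p_{i_0}$, which is exactly the hypothesis $p_{j^*}<\sum_{l\ne j^*}p_l$. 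For $n=3$ the "balancing sub-allocation" is simply the two members of $\mathbf{n}'$ attacking each other with all their power, and the same counting applies; this is the situation of Example~2.

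Two facts drive the Nash verification. First, since no country has friends, $\sigma_i(U)\equiv p_i$ while $\tau_i(U)$ never depends on row $i$ of $U$; hence no country can alter its own state by a unilateral deviation, so Axiom~2 never forces a deviation. Second, unwinding Axioms~1 and~2 shows that an already-unsafe country $j\in\mathbf{n}'$ strictly prefers a deviation only if that deviation makes the currently safe $i_0$ non-safe while leaving every other member of $\mathbf{n}'$ non-safe. But to keep such a member $k$ non-safe, $j$ may withdraw from its attack on $k$ no more than the surplus $\tau_k(U^*)-p_k$, and the total surplus is $\sum_{k\in\mathbf{n}'}\bigl(\tau_k(U^*)-p_k\bigr)=\sum_{k\in\mathbf{n}'}a_k=p_{i_0}$, because every member of $\mathbf{n}'$ spends all of its power inside $\mathbf{n}'$; therefore $j$ can redirect toward $i_0$ at most $\sum_{k\in\mathbf{n}'\setminus\{j\}}(\tau_k(U^*)-p_k)=p_{i_0}-(\tau_j(U^*)-p_j)<p_{i_0}$, which, starting from $\tau_{i_0}(U^*)=0$, never reaches $p_{i_0}$. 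Consequently no profitable deviation exists, $U^*\in\mathcal{U}^*$, and $i_0$ is its unique survivor.

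The bookkeeping (row sums, admissibility, and the $n=3$ case) should be routine. The step I expect to need the most care is the second Nash fact: extracting from the two axioms the precise set of deviations that an unsafe country can strictly prefer --- it can never rescue itself, yet it can still gain by harming a safe adversary, and that residual incentive is exactly what the surplus bound must defeat. Making the feasibility construction uniform across the "balanced" and "one dominant member" subcases is the other place to be careful.
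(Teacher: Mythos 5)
Your proposal is correct and follows essentially the same route as the paper's proof: fix $i_0$, split into the case where $\mathbf{n}\setminus\{i_0\}$ contains a member whose power exceeds the rest of that set and the case where it does not, invoke Lemma~\ref{prop/complete-BE} to balance the sub-collection in the latter case, and have $i_0$ top up the threats so that every other country is strictly unsafe while $i_0$ receives no threat. Your explicit Nash verification --- the observation that no country can change its own state and the surplus bound $\sum_{k\in\mathbf{n}'}(\tau_k(U^{*})-p_k)=p_{i_0}$ showing no unsafe country can redirect enough power to make $i_0$ non-safe --- is a welcome addition, since the paper's proof asserts the equilibrium property without checking it.
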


\begin{proof}  Given an arbitrary country $i$ in $\mathbf{n}$, the set of adversarial pairs except for those involving $i$ is denoted as $\mathcal{R}_{\mathcal{A}} \setminus \mathcal{A}_{i}$. Note that $\mathcal{R}_{\mathcal{A}} \setminus \mathcal{A}_{i}$ still make up a complete subgraph of $\mathbb{G}$, $\mathbb{G}' = \{\mathbf{n}\setminus \{i\}, \mathcal{E}'\}$.

\begin{enumerate}
\item If there exists a country $j$ in the subgraph $\mathbb{G}'$, that is, $j \in \mathbf{n}\setminus \{i\}$, such that its power is no smaller than that of all other countries (i.e., its adversaries) combined in the subgraph, $$p_{j} > \sum_{k \in \mathcal{A}_{j}\setminus \{i\}}p_{k}.$$  

In this case, a pure strategy Nash equilibrium $U^{*}$ in which only $i$ survives and is safe, i.e., $\sigma_{i}(U^{*}) > \tau_{i}(U^{*})$, can be constructed. 

The construction proceeds with two steps. First, let country $j$ allocate enough to make all of its adversaries other than $i$ unsafe. Technically speaking, construct an $U' = [u_{jk}]_{(n-1) \times (n-1)}$ where there holds $$\forall k \in \mathcal{A}_{j}\setminus \{i\}, u_{jk} > p_{k}$$ and $$\sum_{h \in \mathcal{A}_{k}\setminus \{i\}}u_{kh} = p_{k}$$

Second, let country $i$ allocate enough to make $j$ unsafe. Technically speaking, construct an $U = [u_{ij}]_{n \times n}$ by expanding $U'$ to incorporate the allocations between $i$ and countries in $\mathbf{n}\setminus \{i\}$. Let $u_{ij} > p_{j} - \sum_{k \in \mathcal{A}_{j}\setminus \{i\}}u_{jk}$. 

This is feasible because, as assumed, $p_{i} < \sum_{j \in \mathcal{A}_{i}}$.

Then $$p_{j} - p_{i} \leq \sum_{k \in \mathcal{A}_{j}\setminus \{i\}}p_{k} \leq \sum_{k \in \mathcal{A}_{j}\setminus \{i\}}u_{jk}$$

Rearranging terms, $$p_{i} \geq p_{j} - \sum_{k \in \mathcal{A}_{j}\setminus \{i\}}u_{jk}.$$

 Then a pure strategy equilibrium has been derived such that $\sigma_{i}(U^{*})  > \tau_{i}(U^{*})$ and
$\sigma_{j}(U^{*}) < \tau_{j}(U^{*})$ for all $j \in \mathbf{n}\setminus \{i\}.$

\item If there does not exist a country in $\mathbb{G}'$ such that its power exceeds all other countries in $\mathbb{G}'$.   By lemma~\ref{prop/complete-BE}, a balancing equilibrium $U'$ exists for the PAG of the $n-1$ countries on $\mathbb{G}'$. 

Let it be $U' = [u_{jk}]_{(n-1) \times (n-1)}$, where by definition $$\forall j \in \mathbf{n}', u_{jj} = 0; \forall j, k \in \mathbf{n}', u_{jk} = u_{jk};$$ $$\sum_{k \in \mathcal{A}_{j}\setminus \{i\}} u_{jk} = p_{j}.$$

In this case, a pure strategy Nash equilibrium in which only $i$ survives can be constructed by expanding $U'$ to incorporate the allocations between $i$ and countries in $\mathbf{n}\setminus \{i\}$. $\forall j \in \mathbf{n}\setminus \{i\}$, let  $u_{ij} = \frac{p_{i}}{n-1}$.  Then a pure strategy equilibrium has been derived such that $\sigma_{i_0}(U^{*}) > \tau_{i_0}(U^{*})$ and
$\sigma_{j}(U^{*}) < \tau_{j}(U^{*})$ for all $j \in \mathbf{n}\setminus \{i\}.$

 \end{enumerate}

\end{proof}

\begin{theorem}
\label{prop/survive2}
In the PAG $\Gamma = \{\mathcal{C},  p, \mathcal{U}, \sigma_{i}, \tau_{i}, \mathcal{X}, \preceq\}$, if no country has any friend and all the adversary relations among the countries constitute a bipartite graph,  the necessary condition for the PAG to have a Nash equilibrium at which a country not only survives but also is safe is that 
the power of any adversary of this country is no greater than the total power of all its own adversaries (including this country itself). In other words, in the bipartite environment graph $\mathbb{G}_{E} = \{\mathcal{V}, \mathcal{E}_{E}\}$ with the partition of the node set $\mathcal{V}$ into $\mathcal{L}$ and $\mathcal{R}$, $\mathcal{V} = \mathcal{L} \cup \mathcal{R}$ and $\mathcal{L} \cap \mathcal{R} = \emptyset$. $\forall \{v_{i},v_{j}\} \in \mathcal{E}_{E}$, either $v_{i} \in \mathcal{L}$ and $v_{j} \in \mathcal{R}$ or $v_{i} \in \mathcal{R}$ and $v_{j} \in \mathcal{L}$. All edges in $\mathcal{E}_{E}$ are colored red because $\forall i \in \mathbf{n}$, $\mathcal{F}_{i} = \emptyset$. For the PAG to have a pure strategy Nash equilibrium $U^{*}$ in which given country $i$, $\sigma_{i}(U^{*})  > \tau_{i}(U^{*})$, there must hold that $\forall j \in \mathcal{A}_{i}$, $$p_{j} \leq \sum_{k \in \mathcal{A}_{j}}p_{k},$$.

\end{theorem}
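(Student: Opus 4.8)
The plan is to prove the contrapositive. Suppose the displayed inequality fails for some adversary of $i$, i.e.\ there is $j\in\mathcal{A}_i$ with $p_j>\sum_{k\in\mathcal{A}_j}p_k$. I would then show that $i$ cannot be safe in \emph{any} Nash equilibrium, by exhibiting, at every admissible $U$ with $x_i(U)=\text{safe}$, a unilateral deviation for country $j$ that $j$ does not weakly disprefer — so that $U$ fails the equilibrium condition.

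The first step is an observation that makes ``no country has any friend'' do its work: for every country $c$ and every admissible $U$, $\sigma_c(U)=u_{cc}+\sum_{k\in\mathcal{A}_c}u_{ck}$, and since the $c$-th row of $U$ sums to $p_c$ with all null entries zero, this equals $p_c$ identically. Hence $\sigma_c(U)\equiv p_c$, so $x_c(U)$ is governed solely by $\tau_c(U)=\sum_{k\in\mathcal{A}_c}u_{kc}$, which depends on the rows of $c$'s adversaries, not on $c$'s own row. I will use two consequences: country $j$'s state is unaffected by $j$'s own allocation (so Axiom~2 constrains $j$ not at all), and for any $k\in\mathcal{A}_j$, country $k$ is non-safe at an allocation iff the threats it receives are at least $p_k$.

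The second step constructs the deviation. Since $\sum_{k\in\mathcal{A}_j}p_k<p_j$, country $j$ can move to an allocation $V$ (differing from $U$ only in row $j$) with $v_{jk}=p_k$ for every $k\in\mathcal{A}_j$ and the surplus $p_j-\sum_{k\in\mathcal{A}_j}p_k$ held in reserve as $v_{jj}$; this is admissible because the row sums to $p_j$ and $j$ assigns positive power only to itself and its adversaries. Then $\tau_k(V)\ge v_{jk}=p_k=\sigma_k(V)$ for every $k\in\mathcal{A}_j$, so all of $j$'s adversaries are non-safe at $V$; in particular $x_i(V)\neq\text{safe}$ since $i\in\mathcal{A}_j$, while $x_j(V)=x_j(U)$ by the first step.

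The third and hardest step is to verify that this $V$ is indeed profitable for $j$, i.e.\ that it is \emph{not} the case that $j$ weakly prefers $U$ to $V$. Since every adversary of $j$ lies in $\{\text{unsafe},\text{precarious}\}$ at $V$, condition~(b) of Axiom~1 is met for the pair $(U,V)$, giving that $j$ weakly prefers $V$ to $U$. For the reverse, the adversary $i\in\mathcal{A}_j$ has $x_i(U)=\text{safe}$ but $x_i(V)\neq\text{safe}$, which blocks $U$ from being weakly preferred to $V$: in the partial preorder generated by the two axioms, each sanctioned elementary comparison moves a single country's state monotonically in $j$'s ordering, and for an adversary of $j$ the direction ``from non-safe to safe'' is never an improvement, so no finite chain of such comparisons can run from $V$ up to $U$. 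Hence $U$ is not a Nash equilibrium, contradiction; therefore $p_j\le\sum_{k\in\mathcal{A}_j}p_k$ for every $j\in\mathcal{A}_i$. The delicate point throughout is exactly this last bookkeeping with the merely partial preference order — establishing that a deviation which helps $j$ on the coordinate of interest and may hurt it on others is still ``not weakly dispreferred'' in the precise sense the equilibrium definition requires; the rest (admissibility of $V$, the identity $\sigma_c\equiv p_c$, the bound $\tau_i(V)\ge p_i$) is routine.
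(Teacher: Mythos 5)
Your proof is correct and follows essentially the same route as the paper's: argue the contrapositive, note that with no friends $\sigma_j\equiv p_j$ so an adversary $j$ with $p_j>\sum_{k\in\mathcal{A}_j}p_k$ is safe at every allocation, and conclude that such a $j$ forces $i$ to be non-safe in any equilibrium. The only difference is that the paper compresses the last step into a single ``consequently,'' whereas you explicitly construct $j$'s deviation $v_{jk}=p_k$ and verify, via the coordinate-wise monotonicity of the axiom-generated partial order, that it is profitable --- a worthwhile elaboration of the step the paper leaves implicit, and one that correctly handles the fact that the preference relation is only partial.
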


\begin{proof}
The contrapositive is that given country $i$, there exists an adversary of itself $j$ whose total power exceeds that of all $j$'s adversaries. 

It follows that 
$$\sigma_{j}(U^{*}) \geq p_{j} > \sum_{k \in \mathcal{A}_{j}}p_{k} \geq \tau_{j}(U^{*}).$$
which implies that country $j$ is always safe in any equilibrium ; consequently, country $i$ is always unsafe or precarious in any equilibrium.

Equivalently,  the necessary condition for the PAG to have a pure strategy Nash equilibrium $U^{*}$ such that $\sigma_{i}(U^{*})  > \tau_{i}(U^{*})$,  which is that $\forall j \in \mathcal{A}_{i}$, $$p_{j} \leq \sum_{k \in \mathcal{A}_{j}}p_{k}$$ is thus proven.

\end{proof}

The necessary condition stated in Theorem~\ref{prop/survive2} is insufficient for the proposition to hold (see Example 3), and Theorem~\ref{prop/survive3} states a sufficient condition.

\subsubsection*{Example 3}
The networked environment in which the PAG takes place is characterized by the following parameters:
\begin{enumerate}
\item Set of countries (labels): $\mathbf{n} = \{1,2,3,4\}$
\item Countries' total power: $p = [p_{1},p_{2},p_{3},p_{4}] = [4,5,6,5]$.

\item Countries' relations: $\mathcal{A}_{1} =\mathcal{A}_{2} = \{3, 4\}$, and all the other pairs have no relations. 

\item Countries' preferences: Assume the two axioms. 

\end{enumerate}

This example shows that for country 1 or 2, the total power of its adversaries, $p_{3} + p_{4}$, is smaller than that of their adversaries, $p_{1} + p_{2}$. But neither country 1 nor country 2 will be safe in any equilibrium. 

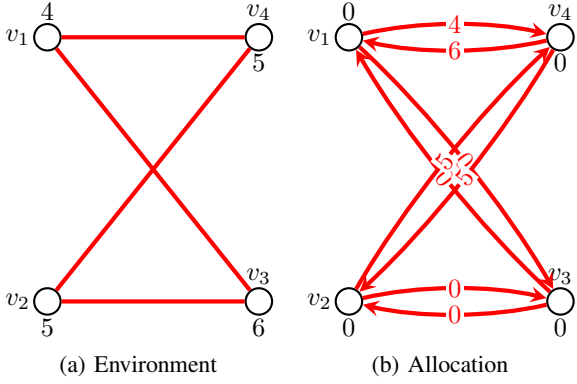
\begin{figure}[htbp]

	\begin{subfigure}[t]{0.45\linewidth}
	\centering
	\begin{tikzpicture}[x=4em,y=-4em]
		\drawnodex{0,0}{v1}{left}{$v_{1}$}{above}{$4$}
		\drawnodex{0,2.5}{v2}{left}{$v_{2}$}{below}{$5$}
		\drawnodex{2,2.5}{v3}{above}{$v_{3}$}{below}{$6$}
		\drawnodex{2,0}{v4}{above}{$v_{4}$}{below}{$5$}
		\drawrededge{v1}{v3}
		\drawrededge{v1}{v4}
		\drawrededge{v2}{v3}
		\drawrededge{v2}{v4}
	\end{tikzpicture}\qquad
	\caption{Environment}
	\label{fail1}
	\end{subfigure}
   \begin{subfigure}[t]{0.45\linewidth}
	\centering
	\begin{tikzpicture}[x=4em,y=-4em]
	           \drawnodex{0,0}{v1}{left}{$v_{1}$}{above}{$0$}
		\drawnodex{0,2.5}{v2}{left}{$v_{2}$}{below}{$0$}
		\drawnodex{2,2.5}{v3}{above}{$v_{3}$}{below}{$0$}
		\drawnodex{2,0}{v4}{above}{$v_{4}$}{below}{$0$}
	
		\drawfoex{v1}{v3}{$0$}{$0$}
		\drawfoex{v1}{v4}{$4$}{$6$}
		\drawfoex{v2}{v3}{$0$}{$0$}
	           \drawfoex{v2}{v4}{$5$}{$5$}
	\end{tikzpicture}
	\caption{Allocation}
		\label{fail2}
	\end{subfigure}
    \caption{Impossibility of Safety for v1 and v2}
    \label{fail}
\end{figure}

\begin{theorem}
\label{prop/survive3}
In the PAG $\Gamma = \{\mathcal{C},  p, \mathcal{U}, \sigma_{i}, \tau_{i}, \mathcal{X}, \preceq\}$, if no country has any friend and all the adversary relations constitute a bipartite graph,  a sufficient condition for the PAG to have a Nash equilibrium in which a country not only survives but also is safe is that 
the power of any adversary of this country is no greater than the total power of its own adversaries (which will include this country), and the total power of these adversaries is no greater than the total power of the adversaries of themselves. 
In other words, in the bipartite environment graph $\mathbb{G}_{E} = \{\mathcal{V}, \mathcal{E}_{E}\}$ with the partition of the node set $\mathcal{V}$ into $\mathcal{L}$ and $\mathcal{R}$, $\mathcal{V} = \mathcal{L} \cup \mathcal{R}$ and $\mathcal{L} \cap \mathcal{R} = \emptyset$. $\forall \{v_{i},v_{j}\} \in \mathcal{E}_{E}$, either $v_{i} \in \mathcal{L}$ and $v_{j} \in \mathcal{R}$ or $v_{i} \in \mathcal{R}$ and $v_{j} \in \mathcal{L}$. All edges in $\mathcal{E}_{E}$ are colored red because $\forall i \in \mathbf{n}$, $\mathcal{F}_{i} = \emptyset$. If for country $i$, there holds that $\forall j \in \mathcal{A}_{i}$, $$p_{j} \leq \sum_{k \in \mathcal{A}_{j}}p_{k}$$ and 
$$\sum\limits_{i \in \mathcal{A}_{i}}p_{j} < \sum\limits_{k \in\mathcal{A}_{\mathcal{A}_{i}}}p_{k}, \;\;\; \text{where} \;\;\; \mathcal{A}_{\mathcal{A}_{i}} = \bigcup\limits_{j \in \mathcal{A}_{i}} \mathcal{A}_{j},$$ (simply, the adversaries of $i$'s adversaries),
then the PAG has a pure strategy Nash equilibrium $U^{*}$ in which $\sigma_{i}(U^{*}) > \tau_{i}(U^{*})$.

\end{theorem}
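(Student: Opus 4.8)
The plan is to first reduce the statement to a purely combinatorial one and then build the equilibrium by hand, in the spirit of the proof of Theorem~\ref{prop/survive}. Since no country has a friend, the support function of every country $k$ collapses to $\sigma_k(U)=u_{kk}+\sum_{\ell\in\mathcal A_k}u_{k\ell}=p_k$, \emph{independently of $U$}; hence a country is safe, precarious or unsafe according as its incoming attack $\tau_k(U)=\sum_{\ell\in\mathcal A_k}u_{\ell k}$ is less than, equal to, or greater than its own power, and no country can change its own state by a unilateral deviation. Using Axioms~1 and 2 it then follows that a country $k$ has a profitable deviation from $U^{*}$ exactly when, by reallocating its own row, it can move some adversary from safe to non-safe while moving no adversary from non-safe to safe. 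So it suffices to produce a pure-strategy profile $U^{*}$ with $\tau_i(U^{*})<p_i$ in which no country has such a deviation.

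I would make $i$ the ``survivor'' by aiming for a profile with three features: (i) $u^{*}_{kk}=0$ for every $k\neq i$; (ii) every $k\neq i$ satisfies $\tau_k(U^{*})\ge p_k$ (precarious or unsafe), with strict inequality for every $k\in\mathcal A_i$; and (iii) every $k\neq i$ uses all of its power on offense and, whenever it has a precarious adversary, puts power only on precarious adversaries, holding each of them exactly at its threshold. Feature (ii) alone already forces $i$ to be safe, via the ``total offense out $=$ total attack received'' identity: with (i) this reads $\sum_{k\neq i}p_k+\alpha=\sum_{k\neq i}\tau_k+\tau_i$, where $\alpha=\sum_{k\in\mathcal A_i}u^{*}_{ik}\le p_i$ is $i$'s total offense, whence $\tau_i=\alpha-\Delta$ with $\Delta=\sum_{k\neq i}(\tau_k-p_k)>0$ (using $\mathcal A_i\neq\emptyset$), so $\tau_i<p_i$. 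Feature (iii), together with the fact that in the bipartite graph $i$ is the only adversary of any country in $\mathcal A_i$, makes $U^{*}$ a Nash equilibrium: $i$'s adversaries are all unsafe so $i$ cannot worsen them; every other country is pinned by (iii) (any reallocation rescues an adversary or changes no state); countries in other connected components — on which I just take any equilibrium, which exists by \cite{allocation} — are untouched.

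Building a profile with (i)--(iii) is the substance of the proof. Working in the connected component of $i$, I would delete $i$ and place on the remaining countries a ``balancing-type'' profile drawn from the balancing-equilibrium analysis of \cite{balancing} for bipartite adversary graphs, then reattach $i$ attacking $\mathcal A_i$ with a total $\alpha\le p_i$ chosen just large enough to push every member of $\mathcal A_i$ strictly below its threshold. As in Theorem~\ref{prop/survive}, this splits into two cases according to whether some $j\in\mathcal A_i$ has power exceeding the combined power of its adversaries \emph{other than} $i$: in that case $i$ must ``top up'' each such $j$, where condition~(1) ($p_j\le\sum_{\ell\in\mathcal A_j}p_\ell$) guarantees each top-up costs at most $p_i$, and condition~(2) — which, since $\mathcal A_{\mathcal A_i}=\bigcup_{j\in\mathcal A_i}\mathcal A_j$ contains $i$, reads $\sum_{j\in\mathcal A_i}p_j-\sum_{k\in\mathcal A_{\mathcal A_i}\setminus\{i\}}p_k<p_i$ — guarantees all the top-ups fit in $i$'s budget; otherwise a genuine balancing equilibrium exists on the component minus $i$ and $i$ only needs to spread a positive amount over $\mathcal A_i$.

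The hard part will be exactly this construction: routing the attacks so that simultaneously every country other than $i$ is at or above its threshold, every country in $\mathcal A_i$ is strictly above it using only $i$'s bounded extra power, and every country is pinned in the sense of (iii) — which may require growing the unsafe region past $\mathcal A_i$ in a careful closure so that enough countries attack only unsafe neighbours. This is where both hypotheses are genuinely used and where the bipartite balancing-equilibrium machinery of \cite{balancing} (beyond Lemma~\ref{prop/complete-BE}, which covers only complete graphs) does the heavy lifting. A minor separate case is the degenerate $\mathcal A_j=\{i\}$, handled directly from $p_j\le p_i$ (condition~(1)) together with condition~(2).
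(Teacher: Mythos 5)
Your overall strategy --- exhaust the power of the countries other than $i$ against one another, then have $i$ spend its own power to push each member of $\mathcal{A}_{i}$ strictly past its threshold --- is the same as the paper's. The paper implements it with an explicit greedy recursion: order the adversarial pairs not involving $i$, let each pair $\{j,h\}$ in turn mutually allocate $\min\{z_{j},z_{h}\}$ of their remaining power, and finally set $u_{ij}=z_{j}(q)+\epsilon$ for each $j\in\mathcal{A}_{i}$. Your opening reduction (with no friends, $\sigma_{k}(U)\equiv p_{k}$, so a country's state is governed solely by its incoming threat and a deviation can only change the states of one's adversaries) is a clean observation the paper leaves implicit, and your reading of when a deviation is profitable is essentially the one the paper relies on.

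However, there are two genuine gaps. First, the construction that you yourself identify as ``the substance of the proof'' is never carried out: it is delegated to ``bipartite balancing-equilibrium machinery of \cite{balancing}'' that is neither stated nor available here (Lemma~\ref{prop/complete-BE} covers only complete adversary graphs). A proof must exhibit the profile; the paper's greedy pairing does so in a few lines. Second, your invariant (ii) --- every $k\neq i$ satisfies $\tau_{k}(U^{*})\geq p_{k}$ --- is infeasible in general bipartite environments, and your conservation argument for $\tau_{i}(U^{*})<p_{i}$ collapses without it. Take the path $i\,\text{--}\,j\,\text{--}\,k\,\text{--}\,\ell$ with $p_{i}=p_{j}=p_{k}=1$ and $p_{\ell}=1000$: both hypotheses of the theorem hold, yet $\tau_{\ell}(U)\leq p_{k}=1<p_{\ell}$ for every admissible $U$, so (ii) cannot be arranged and the quantity $\Delta$ in your identity need not be positive. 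The fix is to localize: $i$'s safety should come directly from controlling $\tau_{i}=\sum_{j\in\mathcal{A}_{i}}u_{ji}$ (in the paper's construction no adversary allocates anything to $i$, so $\tau_{i}=0$), and the two hypotheses are needed only to certify that $i$'s budget $p_{i}$ exceeds $\sum_{j\in\mathcal{A}_{i}}z_{j}(q)$, the total residual that $i$ must top up --- a bookkeeping step that, to be fair, the paper also leaves unverified.
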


 A variation of the algorithm used for constructing a pure strategy Nash equilibrium for the PAG in \cite{allocation} will be used for constructing an equilibrium with the above particular prediction. Specifically, the algorithm is first applied only on the adversarial pairs without those involving country $i$, which proceeds as below:
 
\noindent {\it The Algorithm.}  Let $q$ be the number of pairs in $\mathcal{R}_{a}\setminus \{\{i,j\}|j \in \mathcal{A}_{i}\}$, and $\mathbf{q} = \{1,2, ..., q\}$ be the set of distinct labels for elements in this set. A bijection $\gamma: \mathcal{R}_{a} \to \mathbf{q}$ determines an ordering of the set of the adversarial pairs except those involving $i$, $\mathcal{R}_{a}\setminus \{\{i,j\}|j \in \mathcal{A}_{i}\}$, with $\{j,h\}$ being the $\gamma(\{j,h\})$-th term in the ordering. Let $z(k)$ be the vector of countries' remaining power after the $k$-th recursion, where $z_{i}(k)$ is the $i$-th entry in $z(k)$ denoting country $i$'s remaining power. 

Consider the recursion, $$
z(k) =  z(k-1) -\text{min}\{z_{j}(k-1), z_{h}(k-1)\}(e_{j} + e_{h})$$
$$U(k) = \text{diag}\{z(k-1) - \text{min}\{z_{j}(k-1), z_{h}(k-1)\}(e_{j} + e_{h})\} + $$  $$\text{min}\{z_{j}(k-1), z_{h}(k-1)\}(e_{j}e^{T}_{h} + e_{h}e^{T}_{j})$$where  $k \in \mathbf{q}$, $U(k) \in \mathbb{R}^{n \times n}$, $U(0) = \text{diag}\{z(0)\} = \text{diag}\{z_{1}(0), z_{2}(0), ..., z_{n}(0)\}$, and $\{j,h\} = \gamma^{-1}(k)$.


Update $u_{ij}(q) = z_{j}(q) + \epsilon$, $j \in \mathcal{A}_{i}$, subject to $i$'s total power constraint, $\sum_{j \in \mathcal{A}_{i}}z_{j}(q) + \epsilon = p_{i}$.

\begin{proof}  At the end of the algorithm, $U(q)$ as returned is a Nash equilibrium with none having any incentives to deviate: 

\begin{enumerate}
\item For $i$, $\sigma_{i}(U(q)) > \tau_{i}(U(q))$. For any $j \in \mathcal{A}_{i}$, $\sigma_{i}(U(q)) < \tau_{i}(U(q))$. By the axioms, it achieves the best possible power allocation outcome induced by $U(q)$ and therefore has no incentives to deviate. 

\item For any adversary of $i$, $j \in \mathcal{A}_{i}$, $\sigma_{j}(U(q)) < \tau_{j}(U(q))$. They cannot deviate in any way to strictly improve the power allocation outcome. 
\item For any other country, $j \in \mathbf{n} \setminus \{i\}\cup \mathcal{A}_{i}$, $\sigma_{j}(U(q)) \geq \tau_{j}(U(q))$ and $\forall k \in \mathcal{A}_{j}$, $\sigma_{j}(U(q)) \leq \tau_{j}(U(q))$. They also achieve the best possible power allocation outcome, and thus do not have incentives to deviate. 

\end{enumerate}

\end{proof}

\section{Unique Predictions for Survival}

\subsection{Domination and Protectorate}

In certain environments, the PAG can have unique predictions for countries' survival. As part of formalizing those conditions for these environments,  the notions of ``domination'' and ``protectorate'' will first be introduced. A concept of ``domination-protectorate cover''  will then be discussed and used for establishing the conditions for the PAG to have these unique predictions for countries' survival. 

\subsubsection*{Domination} In an environment graph $\mathbb{G}_{E} = (\mathcal{V}, \mathcal{E}_{E})$ that represent the set of countries and their relations in an environment, if for country $i \in \mathbf{n}$, there holds $$p_{i} \geq \sum_{j \in \mathcal{A}_{i}} p_{j} + \sum_{k \in \bigcup\limits_{j \in \mathcal{A}_{i}}\mathcal{F}_{j}}p_{k},$$ 
we call the set $$\mathcal{D}_{i} = \{i\} \cup \mathcal{A}_{i} \cup \bigcup\limits_{j \in \mathcal{A}_{i}}\mathcal{F}_{j}$$
country $i$'s \emph{domination}, which includes itself, its adversaries, and the friends of its adversaries.

\subsubsection*{Protectorate} In an environment graph $\mathbb{G}_{E} = (\mathcal{V}, \mathcal{E}_{E})$ that represent the set of countries and their relations in an environment, let $$\Xi_{i} = \left\{j \in \mathcal{F}_{i}: p_{j} < \sum\limits_{k \in \mathcal{A}_{j}}p_{k}\right\}$$ be the friends of country $i \in \mathbf{n}$ whose total power is smaller than that of its adversaries and $$\Theta_{i} = \bigcup\limits_{j \in \Xi_{i}}\mathcal{A}_{j}$$ be the set of adversaries of this particular set of friends. If $$p_{i} + \sum\limits_{j \in \Xi_{i}}p_{i} \geq \sum\limits_{j \in \mathcal{A}_{i} \cup \Theta_{i}}p_{j},$$ 
we call the set 
$$\mathcal{P}_{i} =  \mathcal{F}_{i} \cup \{i\}$$ 
country $i$'s \emph{protectorate}, which includes itself and all its friends who can defend themselves either with their own power or with that of $i$.

Since environments without any friend relations are degenerate versions of those with both adversary and friend relations, the definitions of \emph{domination} would be essentially the same in both cases, where the only difference is that a country's \emph{domination} in the latter case would cover the friends of its foes. The definition of \emph{protectorate} in environments with both adversary and friend relations would actually be the same with that of \emph{domination} in environments with only adversary relations.

\subsubsection*{Domination-Protectorate Cover}  In an environment graph $\mathbb{G}_{E} = (\mathcal{V}, \mathcal{E}_{E})$, the collection of dominations and protectorates 
$$\mathcal{Q} = \bigcup\limits_{i \in \mathbf{n}}(\mathcal{D}_{i} \cup \mathcal{P}_{i})$$ 
is a domination-protectorate cover of graph $\mathbb{G}_{E}$.

If the domination-protectorate cover spans the whole graph, which means 
$$\mathcal{Q} = \bigcup\limits_{i \in \mathbf{n}}(\mathcal{D}_{i} \cup \mathcal{P}_{i}) = \mathbf{n},$$ 
the PAG on this graph will have a unique prediction for a country's survival, with the reasoning being that the prediction for any country's survival in the game can be \emph{locally and uniquely} determined within each possible domination or protectorate. This reasoning is now illustrated with the below Example 4 and Theorem~\ref{prop/singlenonhobb}.

\subsubsection*{Example 4} The networked environment in which the PAG takes place is characterized by the following parameters:
\begin{enumerate}
\item Set of countries (labels): $\mathbf{n} = \{1,2,3,4\}$
\item Countries' total power: $p = [p_{1},p_{2},p_{3},p_{4}] = [1,2,1,20]$.

\item Countries' relations: $\mathcal{A}_{1} = \{2\}$, $\mathcal{A}_{3} = \{4\}$, $\mathcal{F}_{2} = \{3\}$, and all other possible pairs of countries have no relations.
\item Countries' preferences: Assume the two axioms.

\end{enumerate}

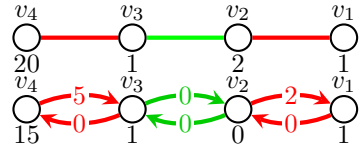
\begin{figure}[ht]
\centering
\begin{tikzpicture}[x=4em,y=-4em][scale=0.25]
	\drawnodex{4,0}{v1}{above}{$v_{1}$}{below}{$1$}
	\drawnodex{3,0}{v2}{above}{$v_{2}$}{below}{$2$}
	\drawnodex{2,0}{v3}{above}{$v_{3}$}{below}{$1$}
	\drawnodex{1,0}{v4}{above}{$v_{4}$}{below}{$20$}
	
	\drawrededge{v1}{v2}
	\drawgreenedge{v2}{v3}
	\drawrededge{v3}{v4}

\end{tikzpicture}

\begin{tikzpicture}[x=4em,y=-4em][scale=0.13]
	\drawnodex{4,0}{v1}{above}{$v_{1}$}{below}{$1$}
	\drawnodex{3,0}{v2}{above}{$v_{2}$}{below}{$0$}
	\drawnodex{2,0}{v3}{above}{$v_{3}$}{below}{$1$}
	\drawnodex{1,0}{v4}{above}{$v_{4}$}{below}{$15$}
	
	\drawfoex{v1}{v2}{$0$}{$2$}
	\drawallyx{v2}{v3}{$0$}{$0$}
	\drawfoex{v3}{v4}{$0$}{$5$} 	\end{tikzpicture}

\caption{D-P Cover Spanning the Graph}
\label{fig/covernhobb}
\end{figure}


Figure~\ref{fig/covernhobb} shows a domination-protectorate cover that spans the whole graph $\mathbb{G}_{E}$. $\mathcal{D}_{2} + \mathcal{D}_{4} = \mathbf{n}$. The respective PAG only has a unique equilibrium class, where countries $2$ and $4$ survive (with their states being $\text{safe}$ here), and countries $1$ and $3$ do not (with their states being $\text{unsafe}$ here).

\begin{theorem}[Unique Prediction for Survival]\label{prop/singlenonhobb} The PAG $\Gamma = \{\mathcal{C},  p, \mathcal{U}, \sigma_{i}, \tau_{i}, \mathcal{X}, \preceq\}$ can \emph{uniquely} predict a country's survival if its domination-protectorate cover spans the whole graph $\mathbb{G}$. In other words, if $\mathcal{Q} = \mathbf{n}$ and there exists a $U^{*} \in \mathcal{U}^{*}$ in which for a country $i \in \mathbf{n}$, $\sigma_{i}(U^{*}) > \tau_{i}(U^{*})$
(or $\sigma_{i}(U^{*}) < \tau_{i}(U^{*})$),  there does not exist $V^{*} \in \mathcal{U}^{*}$ in which $\sigma_{i}(U^{*}) < \tau_{i}(U^{*})$
(or $\sigma_{i}(U^{*}) > \tau_{i}(U^{*})$). 

\end{theorem}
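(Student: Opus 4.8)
The plan is to reduce the global statement to a family of purely local ones, one per block of the cover. Since $\mathcal{Q}=\mathbf{n}$, every country lies in some valid domination $\mathcal{D}_j$ or some valid protectorate $\mathcal{P}_j$, so it suffices to establish two structural facts: (I) if $j$ satisfies the domination inequality, then in every Nash equilibrium $U^{*}$ the country $j$ is not unsafe and every $l\in\mathcal{A}_j$ is not safe; (II) if $\mathcal{P}_j$ is a valid protectorate, then in every $U^{*}$ every country of $\mathcal{P}_j$ survives, i.e.\ is not unsafe. Granting (I)--(II), fix a country $i$ with $\sigma_i(U^{*})>\tau_i(U^{*})$ for some equilibrium $U^{*}$ and locate $i$ in the cover: if $i$ is the centre of a domination or sits in a protectorate it is never unsafe by (I) or (II), so no equilibrium $V^{*}$ with $\sigma_i(V^{*})<\tau_i(V^{*})$ exists; and $i$ cannot be an adversary of a dominator, since (I) would then make $i$ never safe, contradicting the hypothesis. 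The symmetric implication is the contrapositive. The one remaining possibility is that $i$'s only membership in $\mathcal{Q}$ is as a friend of a dominated adversary; this case is the main difficulty and is discussed below.

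For (I), that $j$ is never unsafe is immediate: $\tau_j(U^{*})=\sum_{l\in\mathcal{A}_j}u^{*}_{lj}\le\sum_{l\in\mathcal{A}_j}p_l\le p_j$, so were $j$ unsafe it could move all of $p_j$ into reserve-plus-offense and reach $\sigma_j\ge p_j\ge\tau_j$, a deviation strictly preferred by Axiom~2 --- contradicting that $U^{*}$ is an equilibrium. That no $l\in\mathcal{A}_j$ is safe is the substantive part, and I would argue it by a single reallocation of $j$'s power. Keeping fixed what $j$ currently sends to its own friends (so that $j$'s friends --- and $j$ itself, whose support is unaffected by the move --- stay non-unsafe), $j$ pours the rest onto offense against $\mathcal{A}_j$; since $\sigma_l$ does not involve row $j$ of $U$ (as $j\in\mathcal{A}_l$, hence $j\notin\mathcal{F}_l$), a choice $u_{jl}=\sigma_l(U^{*})+\varepsilon_l$ forces $\tau_l>\sigma_l$ for every $l\in\mathcal{A}_j$ simultaneously, and the total required, $\sum_{l\in\mathcal{A}_j}\sigma_l(U^{*})+\sum_l\varepsilon_l$, is bounded using $\sigma_l(U^{*})\le p_l+\sum_{m\in\mathcal{F}_l}u^{*}_{ml}$ and the disjoint accounting of $\bigcup_{l\in\mathcal{A}_j}\mathcal{F}_l$ by $\sum_{l\in\mathcal{A}_j}p_l+\sum_{k\in\bigcup_{l\in\mathcal{A}_j}\mathcal{F}_l}p_k\le p_j$ --- exactly the ``$+\sum\mathcal{F}$'' term in the definition of domination. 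By Axiom~1 this deviation is preferred by $j$, contradicting equilibrium.

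For (II), the strong friends $l\in\mathcal{F}_j\setminus\Xi_j$, those with $p_l\ge\sum_{m\in\mathcal{A}_l}p_m$, survive in every equilibrium directly by Theorem~\ref{prop/balance} applied to the singleton $\{l\}$. For $j$ together with its weak friends $\Xi_j$, I would repeat the argument of Theorem~\ref{prop/clique}, adapted to the star in which $j$ is adjacent to every member of $\Xi_j$: if $\mathcal{E}\subseteq\{j\}\cup\Xi_j$ were the (proper, by the protectorate inequality) set of unsafe countries, then with $\mathcal{A}_{\mathcal{E}}:=\bigcup_{l\in\mathcal{E}}\mathcal{A}_l\subseteq\mathcal{A}_j\cup\Theta_j$ one gets $\sum_{l\in\mathcal{E}}\bigl(\tau_l(U^{*})-\sigma_l(U^{*})\bigr)\le\sum_{l\in\mathcal{A}_j\cup\Theta_j}p_l-\sum_{l\in\mathcal{E}}p_l\le p_j+\sum_{l\in\Xi_j}p_l-\sum_{l\in\mathcal{E}}p_l$, and the hub $j$ (together with the surviving members of $\Xi_j$) can channel enough spare power to the members of $\mathcal{E}$ to cover all their deficits without becoming unsafe, contradicting equilibrium.

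The main obstacle is the budget/feasibility bookkeeping in (I): one must check that a \emph{single} country can redirect its own power so as to defeat all of its adversaries at once, and, crucially, do so without demoting one of its own friends from survival --- which is why the deviation must leave $j$'s friend-support untouched and why the ``friends of adversaries'' term appears in the domination inequality. The second delicate point --- and the place where the hypothesis $\mathcal{Q}=\mathbf{n}$ is genuinely used --- is the residual case of a country that appears in the cover only as a friend of a dominated adversary: here one must show that the equilibrium conditions together with the preference axioms (and whatever tie-breaking the ambient order supplies) still pin the country to one side of the safe/unsafe divide, and I expect this interaction between the weak preferences of Axiom~1 and the Nash property to require the most care.
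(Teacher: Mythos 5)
Your high-level plan --- reduce the global statement to one local claim per block of the cover, (I) for dominations and (II) for protectorates --- is the same strategy the paper's (very sketchy) proof follows, and your write-up is considerably more substantive than the paper's. But there is a genuine gap, and it is the one you flag yourself and then never close: a country $k$ whose only membership in $\mathcal{Q}$ is as a friend of a dominated adversary, i.e.\ $k\in\bigcup_{l\in\mathcal{A}_j}\mathcal{F}_l$ for some dominator $j$, while $k$ is neither a dominator, nor an adversary of one, nor a member of any valid protectorate. Claims (I) and (II) say nothing about such a $k$'s own state: (I) pins down only the dominator and its adversaries, not the adversaries' friends. Since $\mathcal{D}_j$ is \emph{defined} to include those friends, the hypothesis $\mathcal{Q}=\mathbf{n}$ genuinely permits this case, and ``I expect this to require the most care'' is an admission that the proof is incomplete at exactly the point where it cannot be patched by (I) and (II). Until that case is argued (or shown to be vacuous under $\mathcal{Q}=\mathbf{n}$), the theorem is not proved.

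There are also feasibility problems inside the local claims. In (I) you freeze $j$'s transfers to its own friends, so the power available for the offensive deviation is $p_j-\sum_{m\in\mathcal{F}_j}u^{*}_{jm}$, whereas the domination inequality only bounds the requirement $\sum_{l\in\mathcal{A}_j}\sigma_l(U^{*})$ by $p_j$ itself; if $j$ is propping up a friend at $U^{*}$, the deviation may be infeasible, and withdrawing that support demotes the friend and destroys condition (a) of Axiom 1, so the deviation is no longer even weakly preferred. Hence (I), read as a purely local statement about a single domination, is not established; one must either sharpen the accounting or invoke the global cover hypothesis to rule out such configurations. In (II), the members of $\Xi_j$ need not be friends of one another, so ``the surviving members of $\Xi_j$'' cannot in general transfer power to the unsafe ones --- only the hub $j$ can --- and when $\mathcal{E}\subseteq\Xi_j$ the deficit $\sum_{l\in\mathcal{E}}\bigl(\tau_l(U^{*})-\sigma_l(U^{*})\bigr)$ must be covered by $j$'s spare power alone; your inequality only bounds it by $p_j+\sum_{l\in\Xi_j\setminus\mathcal{E}}p_l$, which is not a bound on what $j$ can actually deliver. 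Both points need repair before the reduction to (I)--(II) carries the theorem.
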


{\em Proof:}
If a domination-protectorate cover spans the whole graph $\mathbb{G}_{E}$, country $i$ is either in a domination, i.e., dominating others or being dominated by others, or is in a protectorate. 
 If country $i$'s domination $\mathcal{D}_{i} \neq \emptyset$ and 
If $i$ is in a protectorate $\mathcal{P}_{i}$, then for all $j \in \mathcal{P}_{i}$, $\sigma_{j}(U^{*}) \geq \tau_{j}(U^{*})$ for any $U^{*} \in \mathcal{U}^{*}$.
Therefore, if there exists a $U^{*} \in \mathcal{U}^{*}$ in which for a country $i \in \mathbf{n}$, $\sigma_{i}(U^{*}) > \tau_{i}(U^{*})$
($\sigma_{i}(U^{*})  < \tau_{i}(U^{*})$),  there does not exist $U^{*} \in \mathcal{U}^{*}$ in which $\sigma_{i}(U^{*}) < \tau_{i}(U^{*})$
($\sigma_{i}(U^{*}) > \tau_{i}(U^{*})$).
Thus, any country's survival can be uniquely determined.
\hfill$\qed$

\section{Conclusion}

This is apparently the first paper to study the countries' survival problem in a networked international environment. One direction of future work is to explore the environments in which a country can not just survive itself but also ``succeed'' in the PAG. This would go beyond the two axioms to a total order of the state space (and a corresponding utility function representation of countries' preferences) in order to rigorously define a country's ``success''.  Another direction is to apply the theory of equilibrium selection to the PAG in our countries' survival problem. For instance, in a PAG with multiple Nash Equilibria, 
can countries always manage to ``select'' those equilibria in which they survive?



\bibliographystyle{unsrt}
\bibliography{alliance,alliance2,alliance3,coalition,nato,colonel}

\end{document}